\newtheorem{theorem}{Theorem}
\newtheorem{claim}{Claim}
\newtheorem{lemma}{Lemma}
\newtheorem{definition}{Definition}
\newtheorem{remark}{Remark}
\newcommand{\bZ}{\mathbb{Z}}
\DeclareMathOperator*{\argmin}{argmin}
\begin{document}
\date{}

\title{Optimal Provision-After-Wait in Healthcare\thanks{
The first author is supported by the Alfred P. Sloan Fellowship, an NSF CAREER award (CCF-1149888), NSF Award CCF-1215990, and a Turing Centenary Fellowship.
The second author is supported in part by the Zurich Financial Services and NSF grant CCF-0832797.
The third author is supported by  NSF Award CCF-1137084. Part of this research was done when the second author was a postdoc at the Institute for Advanced Study and the third author was visiting Princeton University.
The authors would like to thank Itai Ashlagi for pointing us to important references and three anonymous reviewers for their comments.
}%\\{\large [Working paper]\footnote{Last updated: June 23, 2013.}}
}

\author{Mark Braverman\thanks{Department of Computer Science, Princeton University, \tt{mbraverm@cs.princeton.edu.}}
 \and
Jing Chen\thanks{Department of Computer Science, Stony Brook University, \tt{jingchen@cs.stonybrook.edu.}}
 \and
Sampath Kannan\thanks{Department of Computer and Information Science, University of Pennsylvania, \tt{kannan@cis.upenn.edu.}}
}

\maketitle

\begin{abstract}
We investigate computational and mechanism design aspects of optimal scarce resource allocation, where the primary rationing mechanism is through waiting times. Specifically we consider the problem of allocating medical treatments to a population of patients. Each patient has demand for exactly one unit of treatment, and can choose to be treated in one of $k$ hospitals, $H_1,\ldots,H_k$. Different hospitals have different costs, which are fully paid by a third party ---the ``payer''--- and do not accrue to the patients. The payer has a fixed budget $B$ and can only cover a limited number of treatments in the more expensive hospitals. Access to over-demanded hospitals is rationed through waiting times: each hospital $H_i$ will have waiting time $w_i$. In equilibrium, each patient will choose his most preferred hospital given his intrinsic preferences and the waiting times. The payer thus computes the waiting times and the number of treatments authorized for each hospital, so that in equilibrium the budget constraint is satisfied and the social welfare is maximized.

We show that even if the patients' preferences are known to the payer, the task of optimizing social welfare in equilibrium subject to the budget constraint is NP-hard.
%, which gives strong evidence that such a computation requires at least exponential time and is beyond the ability of the payer's computation power.
We also show that, with constant number of hospitals, if the budget constraint can be relaxed from $B$ to $(1+\epsilon)B$ for an arbitrarily small constant $\epsilon$, then the original optimum under budget $B$ can be approximated very efficiently.

Next, we study the endogenous emergence of waiting time from the dynamics between hospitals and patients, and show that there is no need for the payer to explicitly enforce the optimal equilibrium waiting times.
When the patients arrive uniformly along time and when they have generic types, all that the payer needs to do is to enforce the total amount of money he would like to pay to each hospital.
The waiting times will simply change according to the demand, and the dynamics will always converge to the desired waiting times in finite time.

We then go beyond equilibrium solutions and investigate the optimization problem over a much larger class of mechanisms containing the equilibrium ones as special cases. In the setting with two hospitals, we show that under a natural assumption on the patients' preference profiles, optimal welfare is in fact attained by the randomized assignment mechanism, which allocates patients to hospitals at random subject to the budget constraint, but avoids waiting times.
%
%We further investigate the optimization problem over a much larger class of mechanisms that contains the equilibrium ones as special cases. In the setting with two hospitals, we show that under a natural assumption on the patients' preference profiles, optimal welfare is in fact attained by the randomized assignment mechanism, which allocates patients to the hospitals at random subject to the budget constraint, but avoids waiting times.

Finally, we discuss potential policy implications of our results, as well as follow-up directions and open problems.

\end{abstract}

%\category{}{Theory of computation}{Algorithmic game theory}
%\category{J.4}{Social and Behavioral Sciences}{Economics}
%\category{C.2.2}{Computer-Communication Networks}{Network Protocols}

%\terms{Economics, Complexity, Algorithms}
%\terms{Design, Algorithms, Performance}

{\bf Keywords:} healthcare, mechanism design, budget constraint, waiting times

\thispagestyle{empty}

\newpage

\setcounter{page}{1}

\section{Introduction}\label{sec:intro}

In this paper we study computational and mechanism design issues in the context of optimal healthcare
provision. Specifically, we consider the setting where waiting times, and not payments, are used to allocate
scarce care resources among patients. Waiting times in healthcare provision is an important topic of public debate
worldwide. For example, it has a central role in the ongoing debate surrounding the Patient Protection and Affordable Care Act
(``Obamacare") in the United States. In a large number of countries with public health coverage financing, including Australia, Canada, Spain, and the United
Kingdom, procedures such as elective surgery are rationed by waiting \cite{siciliani2005tackling,GS08}.
While in the public perception waiting times are often associated with poor resource management, in the economics literature it is
well-understood that queues of consumers will form whenever a good is priced below the good's perceived value, as long
as supply is scarce \cite{barzel1974theory, lindsay1984rationing, iversen1993theory} -- independently of the ultimate
distribution mechanism. In particular, waiting times in this context are dictated by economic incentive constraints and
not by stochastic fluctuations as in classical queuing theory.
Therefore, whenever ``correct" monetary pricing is impossible or undesirable, waiting times should
be incorporated explicitly into the allocation models.

We focus on providing a single non-urgent healthcare service (such as a particular surgery) to a population of patients, and define the {\sc Provision-after-Wait} problem for this scenario. In our model, a population of patients arrives in each time unit (say, 1 month), seeking for the desired service at some hospital. There are $k$ hospitals providing the service under different costs. The patients have different preferences about the hospitals, and the composition of the patient population in each time unit is the same.
Each patient needs to be served exactly once. The service is fully financed by a third party ---a ``payer'', e.g., the government or an insurer. Therefore the patients' choices of hospitals are not affected by the (monetary) costs.
But the payer, taken to be the government in the rest of this paper for concreteness, has a fixed budget $B$ that he is willing to spend on providing the service to the entire patient population in each time unit, and it is unaffordable to let every patient go to his favorite hospital (otherwise the provision problem is already solved at the very beginning).
Without loss of generality, we assume that the government has enough budget to treat all patients in the cheapest hospital. This can always be achieved by adding a dummy hospital which has cost 0 and is the least preferred by all patients, representing the option of not getting any service.

The government rations the patients' demand subject to his budget by setting for each hospital $H_i$ a waiting time $w_i$, measured using the same time unit.
Every patient going to $H_i$ has to wait for $w_i$ before he can be served. There is no co-pays, and thus the waiting time is the only cost directly incurred by the patients.%
\footnote{Adding co-pays to the model would be interesting follow-up work, but the space of possible models is far vaster with co-pays. Issues in introducing co-pays include dealing with different people having different time/money tradeoffs, and defining the patients' utility properly (with the usual ethical question: do people with higher utility for money have lower utility for health, a.k.a. ``should poor people count for less''?). In this paper we avoid these problems, since time is fair to everybody and our patients’ utility is measured in waiting-time equivalents.}
We assume that waiting times are known to the patients before they make decisions.%
\footnote{For example, the patients can observe the length of the lines before deciding which one to join, or they can be informed explicitly when trying to make an appointment.}
Each patient $P_j$ has value $v_{ij}$ for hospital $H_i$, representing his utility for being treated in $H_i$ right away. Similar to \cite{gravelle2008waiting}, we assume that the patients have quasi-linear utilities with respect to waiting time, that is, patient $P_j$'s utility for being treated at $H_i$ with waiting time $w_i$ is $u_{ij}\triangleq v_{ij}-w_i$.
The primary reason for this choice is that it is the most natural way to ensure that patients are treated equally by welfare-optimizing mechanisms.
Since, as mechanism designers,
we do not have full access to the $u_{ij}$'s of individual patients but can observe waiting times, our welfare-loss due
to waiting will just be the sum of all the waiting times in the system\footnote{We can relax this assumption to
allow utility functions of the form $u_{ij}=v_{ij}-U(w_i)$, where $U(w)$ is a function (common to all patients) that maps waiting time $w$
to utility loss caused by waiting $w$ time units.}.

The patients are unrestricted in their choices of hospitals. Thus, at equilibrium, a patient is assigned to a hospital that maximizes his utility given the waiting times.
The {\em social welfare} of an equilibrium is defined to be the total utility of the patients in each time unit.
The government's goal when solving the {\sc Provision-after-Wait} problem is to find the optimal equilibrium waiting times and assignments of patients to hospitals that maximize social welfare, subject to the budget constraint.

Our model is formally defined in Section \ref{sec:paw}. Below we would like to emphasize three main features of it.

\paragraph{Two non-interchangeable ``currencies''.}
Firstly, as money is still involved, the setting leads to two non-interchangeable ``currencies" of money and waiting time. This complicates the design problem, both conceptually and computationally. As we shall see from the first part of our main results, even if money and waiting time are kept separate and only the latter affects the demand, the fact that they cannot be ``traded'' for each other (thus reducing the setting to one currency) makes the problem much more difficult.

\paragraph{Indirect control of waiting times.}
Secondly, although waiting time is modeled as a parameter whose optimal value is decided by the government, there is no need for the government to enforce it explicitly.
Instead, as we shall show in the second part of our main results, the government can simply decide the amount of money it is willing to pay to each hospital in each time unit, and the desired waiting times at different hospitals will emerge endogenously among the hospitals and the patients.
Indeed, the role of waiting time in our model is similar to that of price in markets. In a market, it is the price that ultimately drives consumers to different purchases, but the producers do not get to dictate it. They can only control the price indirectly by adjusting their supply levels, and the ``correct'' price will emerge endogenously from the market. This analogy makes it more reasonable to adopt our model in reality: it is more natural for the government to control the amount of money it pays and tell a hospital ``I'll only pay you \$5,000 each month for this service'', than for it to control waiting times and tell a hospital ``you have to make each patient using this service wait for 3 months''.

\paragraph{Welfare-burning effect of waiting times.}
Finally, unlike monetary transfers, nobody benefits from one's waiting time, and thus waiting times represent a net loss in welfare.
That is why in our model the social welfare is defined as the total {\em utility} of the patients ---that is, total value minus total waiting time---, differently from auctions where social welfare is the total {\em value} of the buyers.
The welfare-burning phenomenon is common in the study of resource allocation with waiting times, and is similar to the money-burning mechanisms \cite{hartline2008optimal}, subject to the important caveat that time burnt is not interchangeable with money.

Given the general welfare-burning effect of waiting times,
%Given that waiting times are a ``dead weight" in terms of welfare,
it is very natural to ask whether they can be avoided or reduced
via a different allocation mechanism altogether. If monetary payments are not allowed, and patients are free to choose their hospitals,
then the (deterministic) equilibrium solution of the {\sc Provision-after-Wait} problem is the only one possible. What if the government has sufficient control over the patients that it
can tell them where to receive their treatment, or otherwise restrict their options\footnote{Possible ``soft" mechanisms for doing
this are discussed below.}? The simplest such mechanism would be a randomized assignment of patients to available slots, with the probabilities decided by the budget constraint.
In such assignment, we benefit from zero waiting time. On the downside, we incur an efficiency loss: patients may not end
up in the hospitals they prefer. How does this randomized assignment mechanism compare to the mechanism where patients are
given a free choice and waiting times are used as a rationing tool? The answer to this question depends on the preference
profiles of the patients. Informally speaking, if patients have strong {\em and diverse} preferences on where to be treated, then
the free-choice equilibrium mechanism is better, since efficiency gains due to better allocation offset the inefficiency caused by waiting.
At the other extreme, if all patients have similar preferences, then no efficiencies are to be gained from patients' choice, and
 randomized assignment mechanisms are superior. We further investigate this question in the case of two hospitals, in the third part of our main results.

\subsection{Main results}\label{subsec:results}

\subsubsection*{Finding optimal equilibrium waiting times and assignments}

We first study the computational issues in our model, assuming that the government is fully informed about the hospitals' costs and the patients' valuations. The following theorem shows that the {\sc Provision-after-Wait} problem is hard to solve in general.

\smallskip

\noindent
{\sc Theorem \ref{thm:np}.} {\em
Finding optimal equilibrium waiting times and assignments is NP-hard.
}
\smallskip

The hardness result motivates one to ask whether one can efficiently approximate the welfare of the optimal solution.
%We do not currently have a full answer to this question. However,
Interestingly, we show that if we relax the budget constraint to
$(1+\epsilon) B$ with an arbitrarily small constant $\epsilon$, we can
achieve at least as much welfare as the best $B$-budget equilibrium solution,
using an algorithm whose running time depends on $(\log m)^k$, where $m$ is the number of patients in one time unit and $k$ is, as already mentioned, the number of hospitals.
%is exponential in the number of hospitals but is polynomial in the number of patients if the number of hospitals is constant:

\smallskip

\noindent
{\sc Theorem \ref{thm:alg}.} (rephrased) {\em
%Let $k$ be the number of hospitals and $m$ the number of patients.
There is an algorithm that runs in
time $O\left((\log_{1+\epsilon} m)^k \cdot m^4\right)$ and outputs an equilibrium solution such that,
%capacities $\{\lambda_i\}_{i=1}^k$ for the hospitals and the corresponding equilibrium waiting time and assignment such that,
the total cost is at most $(1+\epsilon)B$ %$\sum_{i=1}^k c_i\lambda_i \leq (1+\epsilon)B$, and the social welfare
and the social welfare is at least as high as that of the optimal equilibrium
solution with budget $B$.
}
\smallskip

%In Theorem~\ref{thm:alg} we assume that the government has full knowledge about the distribution of the patients' preference (though
%no knowledge about the type of each specific patient).
These results are formally presented in Sections \ref{sec:np hard} and \ref{sec:approximation}.
It remains an interesting open problem whether
there is a welfare approximation algorithm that does not exceed the budget. Also, it is unknown whether there is an approximation
algorithm that is polynomial in $k$.

\subsubsection*{Letting waiting times emerge endogenously}
Next we show how the desired waiting times and the corresponding optimal social welfare can emerge endogenously as the patients arrive and choose their favorite hospitals in dynamics.
Say the government has decided how to spend its budget for the desired service, by using our approximation algorithm above or by using other methods. The way of spending the budget can be enforced by setting the {\em quota} for each hospital, namely, how many patients the government is willing to pay in one time unit (of course, the total quota must be at least the number of patients).

It is natural to assume that the hospitals want to keep waiting times as low as possible, and at time 0 all hospitals have waiting time 0.
When the patients arrive along time, they choose which hospital to go according to their own valuations and the current waiting times.
If a hospital gets over-demanded, namely, the number of patients going there exceeds the quota paid by the government, then a line has to form and this hospital's waiting time increases accordingly. If the waiting time becomes too high due to previous demand, patients arriving later may choose not to go there and the hospital may become under-demanded, causing its waiting time to decrease.
As there may be many waiting time vectors of the hospitals that correspond to equilibrium assignment given the quotas, it is not immediately clear which one the dynamics will converge to (if it converges), and how much social welfare the government can generate from the dynamics.

Assuming the patients' valuations are in a generic position as properly defined in Section \ref{sec:dynamics}, our following theorem characterizes the structure of the optimal equilibrium given any quotas of the hospitals.

\smallskip

\noindent
{\sc Theorem \ref{thm:unique}.} (rephrased) {\em
For any quotas of the hospitals, there is a unique optimal equilibrium maximizing social welfare. It has the minimum waiting time vector among all equilibria, and any hospital whose quota is not fully used has waiting time 0.
}
\smallskip

%Let $w(t)$ be the waiting time vector of the hospitals, such that $w(0)=(0,\dots,0)$. Let $\bar{w}$ be the tight competitive equilibrium for $\lambda$.
Accordingly, it is reasonable to hope that the optimal equilibrium is the one implemented by the dynamics. Our following theorem shows this is indeed the case.

\smallskip

\noindent
{\sc Theorems \ref{thm:wait} and \ref{thm:converge}.} (rephrased) {\em
At any point of time, the waiting time of any hospital will never exceed its waiting time in the optimal equilibrium, and thus the social welfare generated in any time unit will be at least the optimal social welfare given the quotas. The dynamics will always converge to the optimal equilibrium, in time proportional to the number of hospitals, the maximum social welfare of the patients, and the maximum quota of the hospitals.
}
\smallskip

These results are formally presented in Section \ref{sec:dynamics}.
%It would be interesting to show that the dynamics will always converge. But as we just pointed out, the optimal social welfare given the quotas will be generated whether the dynamics converges or not.

\subsubsection*{When is the randomized assignment optimal?}
Finally, we turn our attention to the enlarged setting where
we are not limited to mechanisms that produce equilibrium solutions.
The two ``extreme" mechanisms are the equilibrium
mechanism discussed above that gives the patients free choices, and the randomized assignment mechanism that assigns patients at random to available slots and does not give them any choice.
In addition, there is an infinite number of various {\em lotteries} in-between these extremes. In a lottery, the patients are presented
with a set of distributions over hospitals, with an expected waiting time associated with each distribution.
Instead of free choices among all possible (distributions of) hospitals, the patients can only choose from the available ones in the lottery, and they make choices to maximize their expected utilities.

Intuitively, if there are no extreme variations among the patients' preferences, the randomized assignment
should outperform other mechanisms, since it avoids the deadweight loss of waiting times.
We give further evidence
suggesting that randomized assignment may be superior in terms of social welfare, by analyzing the case when
there are two hospitals.

Let the hospitals be $H_0$ and $H_1$ with costs $c_0$ and $c_1$ respectively,  such that $c_0< c_1$. We assume without loss of generality that patients going to hospital
$H_0$ faces no waiting time\footnote{Indeed, positive waiting time at $H_0$ will give patients incentives to go to the more expensive hospital $H_1$, and thus increase the total cost while burning more social welfare.}.
Thus patients who prefer $H_0$ over $H_1$ will always choose $H_0$. We can therefore
exclude them from consideration, and focus on patients who prefer $H_1$ over $H_0$.
%Further, in this case, the number of patients that $H_1$ can serve is completely determined by the budget, and the only problem is to design a mechanism allocating patients who prefer $H_1$ to the available slots there.
%$\lambda_0$ and
%$\lambda_1$ are completely determined by the budget constraint, and the only problem is to design a mechanism for allocating
%patients who prefer $H_1$ to the $\lambda_1$ available slots.

We assume a continuous population of such patients, indexed by the $[0,1]$ interval.
Each patient $x$ is associated with a value $v(x)$, representing how much time $x$ is willing to wait to be treated in $H_1$ instead of $H_0$.
That is, $v(x)$ is the difference between $x$'s utility for being treated at $H_1$ immediately and his utility for being treated at $H_0$ immediately.
We rename the patients so that $v(x)$ is a non-decreasing
function on $[0, 1]$. Thus, for example, $v(0.5)$ represents the median time that patients preferring $H_1$ are willing to wait to
be treated there.
We prove the following theorem in Section \ref{sec:random}.

\smallskip

\noindent
{\sc Theorem \ref{thm:opt_lottery}.} (rephrased) {\em
If $v(x)$ is concave, then no lottery can generate more social welfare than the randomized assignment.
%\footnote{Here contracts offered may take the form of a lottery.}
%offered to the patients
%can beat the randomized assignment in terms of social welfare.
}
\smallskip

Here a lottery is a set of options, each consisting of a probability of being treated in $H_1$ and
the corresponding waiting time there.
%We assume that patients are risk-neutral.
This shows that for a broad class of preferences,
%lotteries are
the randomized assignment is welfare-maximizing even when waiting times are an option available to the government.
As a special case, this shows that randomized assignment has better welfare than the optimal equilibrium solution.
 It would be interesting
to find an analogous sufficient condition for three or more hospitals.

\subsection{Discussion and open problems}\label{sec:discussion}

In this paper we consider two separate issues. The first one is how to
optimally allocate treatments in equilibrium, when the government faces budget constraints and waiting times are used to ration patients' behavior.
The second one is whether it may be beneficial to do away with the (ex-post) equilibrium
requirements by limiting available options of the patients.

While finding the optimal equilibrium solution in the {\sc Provision-after-Wait} problem is NP-hard, our approximation result suggests
that this problem might not be as difficult in practice.
In many cases the number of treatment
facilities involved is fairly small, making running time exponential in $k$ feasible.
Moreover, in some cases the ``hospitals" are actually treatment alternatives that vary in costs (e.g. physiotherapy is cheaper than knee replacement), in
which case $k$ may be as low as $2$.
For the general case where $k$ can be big, it would be interesting to explore restrictions on the patients' valuations that would make the exact optimization efficient, such as when the valuations are highly correlated so that the valuation matrix $(v_{ij})$ has low rank.
There are many questions one can ask about the general complexity of the {\sc Provision-after-Wait} problem, for example, whether it is strongly NP-hard, whether it has an FPTAS, whether it is fixed-parameter tractable in the number of hospitals, etc.

%In addition to connections to
%Besides money-burning mechanisms \cite{hartline2008optimal},
As we shall show, equilibrium assignment with waiting times has a strong connection to unit-demand auctions \cite{DGS86, AMPP09}, and such a connection leads to our approximation result.
%as shown in Section \ref{sec:approximate} making the step of computing the optimal equilibrium waiting times based on capacities and patients' preferences
% a computationally feasible task.
One natural question is whether this connection can be used in dynamic setting to show that the system will remain in the patient-optimal equilibrium as the population's preferences
 slowly shift over time.
A related question is whether it is possible to approximate optimal welfare in equilibrium
 if the government only knows the approximate distribution of patient types in the population.
Another related question is whether one can design mechanisms for our setting such that the patients have incentives to truthfully reveal their valuations, so that the government does not need to know these valuations to begin with. A similar question is whether the government can elicit the hospitals' true costs via some mechanisms ---given the existence of rent in healthcare, finding true costs and paying hospitals accordingly would be helpful in reducing the government's expenses.

The study of waiting times as a rationing mechanism is closely related to the study of ordeal mechanisms \cite{alatas2012ordeal},
where other tools (e.g. excessive bureaucracy) are used in place of waiting times to reduce demand to the supply level\footnote{Note
that in medicine not all ordeals are necessarily dead-weight loss. For example, the famous (and highly-demanded) Shouldice hernia clinic in Ontario, Canada requires
its patients to lose weight before being admitted for a surgery \cite{Shouldice}. Most clinics do not place such a requirement.}.
These may be used in settings where queues are not an option such as school choice. Developing computational
mechanism design tools for these settings is a very interesting direction of study.

%\smallskip

Our third result looks beyond equilibrium solutions. We give evidence that equilibrium solutions are in fact dominated
in many cases. One immediate implication is that giving the government power to restrict choice may in fact improve
overall welfare. While this is perhaps not surprising, choice restriction may be very difficult or politically infeasible to implement
in practice, due to the fact that patients have an inherent preference for choice \cite{rosen2001patient}.

There are important indirect ways, however, in which the government may influence choice.
One of them is through release (or non-release) of quality of care information about providers. The topic
of quality of care information is important both in theory and in practice. In the United States, for example,
Medicare has started to publicly release hospital performance information as part of its pay-for-performance
push \cite{kahn2006snapshot}.
The effect performance reporting has on {\em provider} incentives
has been the subject of much study and discussion \cite{rosenthal2004paying,lindenauer2007public, GS10}.
It has even been suggested that it would be possible to manipulate reported quality metrics
in a way that would force the provider to exert first-best quality and cost effort \cite{albert2012information}.
To the best of our knowledge, there has been no work on the effect of quality reporting on {\em patient} incentives.\footnote{In \cite{GM13} the authors show that in special market structures the consumers may benefit from their uncertainty about the product valuation. But the model is very different.}

Inasmuch as quality information influences patients' choices, it may actually cause harm in the context
of allocation using waiting times. Consider a scenario where there are two hospitals, a good one $H_g$ and
a bad one $H_b$. All patients prefer the good hospital over the bad by the same amount, but they do not know which is which.
As a result, both hospitals will receive half the patients, and waiting time will be zero.
If the government reveals that $H_g$
is the good hospital through its quality-of-care disclosure, then all patients will prefer $H_g$ over $H_b$ by the same
amount $\Delta$. Unless $H_g$ has enough slots for everybody, the waiting time there will have to be $\Delta$, which completely burns social welfare and makes all patients worse-off
than when they were ignorant. In effect, before the quality disclosure, uninformed patients implemented the randomized
assignment -- through free choice. Once the quality information was disclosed, the game moved to the equilibrium solution.

Our results and the discussion above suggest that in some cases a population of more informed patients will experience higher waiting times
and lower overall utility than uninformed patients. This suggests an unfortunate potential side effect of information
disclosure in cases where allocation is done by waiting times. Such a side effect
%This is a very preliminary observation, but it
deserves further study since, at the moment, quality information release is regarded as an absolute good.
Understanding the optimal structure of information released to the patients in terms of overall welfare
(as well as provider-side incentives) is an important and interesting direction of study.

\subsection{Additional related work}

The role of waiting time can be studied either from the supply side, namely, how waiting times interact with the hospitals' incentives, or from the demand side, namely, how they interact with the patients' incentives. In \cite{siciliani2005tackling} the authors give a thorough analysis of existing policies on reducing waiting times by affecting the incentives of either side.
 %Some of the policies are designed to affect incentives of the supply side, namely, the hospitals, and some are designed to affect incentives of the demand side, namely, the patients.
%The conclusions of \cite{SH05} are based on statistical data in relevant countries, and no mathematical model or rigorous analysis is provided.
Our model focuses on the demand side, and below we discuss some other works that also focus on this side.
%. In the first two parts of our main results we provide the patients with free choice of hospitals, while in the last part we study schemes that put constraints on the patients' available choices. The goal of our schemes is not to directly reduce waiting time, but to maximize social welfare. Yet the total waiting time in the system is minimized when social welfare is maximized.

%There are many works on rationing the demand by waiting times in healthcare provision, with different models and focuses.
The authors of \cite{GS08} study quality and waiting times with the existence of ex post moral hazard. They assume that the patients are ex ante identical, and that the treatment has {\em objective} quality levels with which both the valuations and the costs are monotonically increasing. But notice that if the patients are identical, rationing by waiting times is bounded to burn a lot of social welfare since at equilibrium every patient has to be treated in the same way ---as elaborated in our results. In our model the patients' valuations can be arbitrarily associated with different hospitals, reflecting {\em subjective} views they may have, and the hospitals' costs can also be arbitrary and do not necessarily reflect their real quality.

In \cite{gravelle2008waiting, GS09} the authors study the effect of waiting time prioritization on social welfare. They consider a single waiting list (or in our language, a single hospital), and the patients are prioritized and may face different waiting times in the same list. In our model different hospitals may have different waiting times, but we do not discriminate the patients, and at the same hospital everybody faces the same waiting time. In \cite{DGJMS07} the authors give experimental evidence on the effect of expanding patient choice of providers on waiting times. In their theoretical model, there are two hospitals and the patients can freely go to the one with shorter waiting time. Thus the patients do not have subjective preferences over hospitals, and waiting time is the only parameter affecting their choices.
Moreover, the authors of \cite{F08} study the relationship between waiting times and coinsurance, with a single hospital and a single representative consumer.
%, and they also use quasi-linear utility functions for the patients as in our model. But they consider a single queue, and the patients decide whether to join the queue or not instead of which queue to join. This can be considered as a special case of our model, where there are two hospitals, with a dummy one representing the option of not being treated. (The complete model of \cite{gravelle2008waiting} is very different from ours and their results are not implied by ours.)

In \cite{L12} the author studies resource allocation where the consumers wait for the stochastic arrival of the items. Differently from our model and the models discussed above, in this work waiting time does not burn social welfare, as the total waiting time of the consumers is always the time for enough items to arrive. There are two different types of items to be allocated, and also two types of consumers, respectively preferring one type of items. A consumer can decide whether he wants to take the arriving item or to continue waiting for his preferred type. The social welfare of the system is measured by the probability that a consumer is matched to his preferred type. Although this is a very different model from ours, it is worth mentioning that the author provides a truthful queuing policy which is optimal. As we have discussed in Section \ref{sec:discussion}, it would be interesting to design a truthful mechanism in our model from which the government can elicit the patients' valuations.

%Stochastic arrivals and services is also a well studied model in resource allocation.

Finally, in none of the works mentioned above is the insurance/resource provider's budget constraint considered as a parameter affecting waiting times and social welfare.

\section{The Provision-After-Wait Problem}\label{sec:paw}

%Following classical studies in health economics (e.g., \cite{CJ85} and \cite{GS08}), we focus on the provision of a single healthcare service, such as a surgery, to a population of patients. But rather than having the whole healthcare system as a single provider, in our model several hospitals provide the same service at different costs, and each patient has his own valuations about the hospitals.

Now let us be formal about our model. The {\sc Provision-after-Wait} problem studies how to provide a single healthcare service to a population of patients, and is specified by the following parameters.
%We focus on the provision of  a single healthcare service to a population of patients. Several hospitals provide the same service at different costs, and each patient has his own valuation for being served at each hospital immediately. A government, which is taken to be the government from now on, covers the costs for all patients, so a patient's utility does not depend on the costs. But due to its budget constraint the government cannot simply let each patient choose his favorite hospital. Thus it uses waiting times to provide incentives for the patients to go to different hospitals.
%%As usual, the utility that a patient gets when he is served at a particular hospital decreases as the waiting time there increases.
%%Under proper choices of waiting times, the government ensures that provides incentives for the patients to choose their hospitals so that the budget constraint is satisfied.
%Among all choices of waiting times that meet the budget constraint, the government is interested in finding the optimal one that maximizes social welfare, that is, the total utility of the patients.
%More precisely,

\begin{itemize}
\item
The set of {\em hospitals} is $\{H_1,\dots, H_k\}$.

\item
For each $i\in [k]$, the {\em cost} of $H_i$ serving one patient is $c_i \in \bZ^+$, where $\bZ^+$ is the set of non-negative integers.
%\footnote{We assume that a hospital has fixed quality and cost for providing the desired service. This is without loss of generality, as we can represent multiple quality-cost levels of a real hospital by multiple hospitals in our model.}

%\item
%To ease our discussion, we introduce the {\em null hospital} $H_0 = (-\infty, 0)$. A patient goes to $H_0$ means that he is not served.

\item
The set of {\em patients} is $\{P_1,\dots, P_m\}$.

\item
For each $i\in [k]$ and $j\in [m]$, the {\em value} of patient $P_j$ for hospital $H_i$ is $v_{ij}\in \bZ^+$.

\item
An {\em assignment} of the patients to the hospitals is a triple $(w, h, \lambda)$, where $w = (w_1, \dots, w_k)\in (\bZ^+)^k$ is the {\em waiting time vector} of the hospitals, $h: [m]\rightarrow [k]$ is the {\em assignment function}, and $\lambda = (\lambda_1, \dots, \lambda_k)\in \{1,\dots, m\}^k$ with $\sum_{i\in [k]} \lambda_i = m$ is the {\em quota vector}, such that $|h^{-1}(i)| = \lambda_i$ for each $i\in [k]$.

According to such an assignment, patient $P_j$ will receive the service at hospital $H_{h(j)}$ after waiting time $w_{h(j)}$.

%Thus we assume that the true capacity that $H_i$ is able to handle is large enough, and $\lambda_i$ or $w_i$ does not reflect this true capacity.
%Due to the budget constraint, $\lambda_i$ may be smaller than the true capacity that $H_i$ is able to handle, so that $H_i$ is not saturated.
%Notice that the vector $\lambda$ can be inferred from the function $h$. We keep it in the model explicitly for simplicity.

\item
A patient $P_j$'s {\em utility} under assignment $(w, h, \lambda)$ is $u_j(w, h, \lambda) \triangleq v_{h(j)j} - w_{h(j)}$,
%for going to hospital $H_i$ with waiting time $w_i$ is $u_{ij}\triangleq v_{ij} - w_i$.
that is, quasi-linear in the waiting time.

%$P_j$'s utility under assignment $(\lambda, w, h)$ is

The {\em social welfare} of this assignment is $SW(w, h, \lambda)\triangleq \sum_{j\in [m]} u_j(w, h, \lambda)$.

\item
The government has {\em budget} $B\in \bZ^+$, and an assignment $(w, h, \lambda)$ is {\em feasible} if $\sum_{i\in [k]} \lambda_i \cdot c_i \leq B$.

For the problem to be interesting, we assume that $m c_{\mbox{\scriptsize{min}}} \leq B < m c_{\mbox{\scriptsize{max}}}$, where $c_{\mbox{\scriptsize{min}}}$ and $c_{\mbox{\scriptsize{max}}}$ are respectively the minimum and the maximum cost of the hospitals.
\end{itemize}

\begin{remark}
The hospitals' costs, the patients' valuations, and the waiting times are assumed to be integers without loss of generality. As long as they have finite description, we can always choose proper units so that all of them are integers.
\end{remark}

\begin{remark}
The quota vector of an assignment can be inferred from the assignment function and thus is redundant. We define it explicitly to ease the discussion of our main results.
\end{remark}

We would like to emphasize that, in the healthcare literature waiting time is recognized as a tool to ration supply by driving down demand. As such, it {\em does not} depend on the congestion at the hospitals, but rather on the patients' ``willingness to wait''.
In our model, the waiting times are decided by the government according to its budget and the patients' values. Even if a hospital's real capacity (namely, the maximum number of patients it is able to handle, which is typically assumed to be large enough\footnote{It is easy to introduce the hospitals' real capacities as additional parameters into our model, and require that a hospital's quota in an assignment does not exceed its real capacity. But doing so does not make the problem any more interesting ---the optimization problem is even harder, and all our results remain true. Thus we simply assume that the real capacities are large enough.}) is bigger than the number of patients going there, the patients may still have to wait for certain amount of time, because letting them wait for any shorter will result in more patients demanding that hospital than the government can afford. This is demonstrated by the following example.

Assume there are two hospitals, $H_0$ and $H_1$, with costs \$500 and \$3,000 respectively.\footnote{In reality, the cheap ``hospital'' may in fact be a cheap service such as a CT scan, while the expensive one may in fact be an expensive service such as an MRI. A patient is willing to get either one of them, with different values.} There are three patients, valuing $H_1$ for 10, 7, 3 respectively, and all valuing $H_0$ for 0. The government has budget \$6,000. Assume that $H_1$ is capable of handling all three patients immediately. Yet, if the government lets $H_1$ be saturated and sends all three patients there, the total cost will be \$9,000, which is unaffordable. It is clear that the government can afford only one patient at $H_1$. Thus at equilibrium the waiting time at $H_1$ must be 7, and only the patient who is willing to wait for 10 will actually be served there. Notice that this patient has to wait even though there is no congestion at all, because of the budget constraint.

\medskip

%
%in our model, the waiting times at the hospitals are set by the government in order to balance the supply he can afford due to his budget and the demand of the patients. Thus the waiting times depend on the patients' values, namely, their ``willingness to wait'', and {\em do not} depend on the congestion at the hospitals, as typical in queuing theory.
%
%Indeed, Suppose there are two hospitals, cheap and expensive, e.g., one for CT and one for MRI. The government, due to its budget constraint, can only pay for 2 MRIs a month. If 5 patients arrive each month, willing to wait for 10, 8, 6, 4, and 1 months for an MRI, then at equilibrium the waiting time for MRI will be 6 months, so that the 2 patients willing to wait for 10 and 8 months get an MRI after 6 months, and the other 3 patients get a CT in the month they arrive. Note that in the long run the number of people joining and leaving the MRI queue must be the same.

%\begin{remark}
%The capacity $\lambda_i$ of hospital $H_i$ is the maximum number of patients that $H_i$ is {\em allowed} to serve in a time unit, say a month. It may be different from $H_i$'s real capacity, namely, the maximum number of patients that $H_i$ is able to serve. The latter is assumed to be large enough. But due to the government's budget constraint, some hospitals may not be saturated.
%Accordingly, the $w_i$'s do not directly depend on the $\lambda_i$'s or the real capacities as in classical queuing theory. Instead they are indirectly used as a way of controlling the patients' incentives for going to different hospitals.
%\end{remark}

Since in reality the government may not be able or willing to force a patient to go to a hospital assigned to him, it must ensure that wherever it wants that patient to go is indeed the best hospital for him, given the waiting times. Accordingly, we have the following definition.
\begin{definition}
Assignment $(w, h, \lambda)$ is an {\em equilibrium assignment} if: (1) it is feasible, (2) for each $j\in [m]$ we have $u_j(w, h, \lambda) \geq 0$, and (3) for each $j\in [m]$ and $i\in [k]$ we have
$$u_j(w, h, \lambda) \geq v_{ij} - w_i.$$

Assignment $(w, h, \lambda)$ is an {\em optimal equilibrium assignment} if: (1) it is an equilibrium assignment, and (2) for any other equilibrium assignment $(\lambda', w', h')$,
$$SW(w, h, \lambda) \geq SW(w', h', \lambda').$$

The social welfare of optimal equilibrium assignments is denoted by $SW_{OEA}$. % to denote the social welfare of the optimal equilibrium assignments.
\end{definition}

As we are interested in the (existence and) computation of optimal equilibrium assignments, we assume that the government has precise knowledge about the cost of each hospital. We may also assume that the government knows each patient's valuation for each hospital, but we do not need it. In fact, it is enough for the government to know the ``distribution'' of the $k$-dimensional valuation vectors of the patients, namely, the fraction of the patients having each particular valuation vector. (How to obtain such information is an interesting mechanism design as well as learning problem.) Once it computes $w$ in the optimal solution, the assignment function $h$ will be automatically implemented by the patients going to their favorite hospitals%
\footnote{Each patient can easily compute which hospital maximizes his utility, given that he knows the hospitals' waiting times and his own valuations. If there are more than one favorite hospitals for a patient, we assume that he goes to the cheapest one, so that the budget constraint is satisfied.}, and the government need not know where each patient is going.

Notice that it is not enough for the government to know the distribution of the valuations for each single hospital, since the correlations between patients' valuations for different hospitals will affect the optimal outcome. As an easy example, say there are two hospitals $H_1$ and $H_2$ with costs $B-1$ and $1$ respectively ($B>>1$), and two patients $P_1$ and $P_2$. The valuation vector $(v_{11}, v_{21}, v_{12}, v_{22})$ is either $(10, 0, 4, 6)$ or $(10, 6, 4, 0)$. For each single hospital, the distribution of valuations is the same in the two cases. However, in the former case the optimal waiting time vector is $(0, 0)$ while in the latter it's $(4, 0)$. Thus the optimal solution can't be computed given only the valuation distributions of individual hospitals.

\section{The Computational Complexity of Optimal Equilibrium Assignments}\label{sec:np hard}

%\subsection{Some Helpful Observations}
We begin with two easy observations about our model, as a warm-up.

The first observation is that, if the patients have unanimous preferences, namely, $v_{ij} = v_{ij'}$ for each $i\in [k]$ and each $j$, $j' \in [m]$, then no equilibrium assignment can improve the social welfare of the following trivial one: order the hospitals according to the patients' valuations decreasingly, find the first hospital $H_i$ such that $m c_i\leq B$, and assign all patients to $H_i$ with $w_i = 0$ and $w_{i'} = \max_{i''\in [k]} v_{i''1}$ for any $i'\neq i$.
Indeed, for any equilibrium assignment $(w, h, \lambda)$ we have $v_{h(j)j} - w_{h(j)} = v_{h(j')j} - w_{h(j')}$ for each $j, j'\in [m]$. Letting $i^* = \argmin_{i: h^{-1}(i)\neq \emptyset} c_i$, $\lambda'$ be such that $\lambda'_{i^*} = m$ and $\lambda'_i = 0$ for all other $i$, $h'$ be such that $h'(j) = i^*$ for all $j$, we have that $(w, h', \lambda')$ is another equilibrium assignment with the same social welfare as $(w, h, \lambda)$.
Thus it suffices to look for an optimal equilibrium assignment that sends all patients to the same hospital.
This is also intuitive: if the patients are all the same, then at equilibrium the government must make them equally happy, and it can do so by treating them in the same way.

Another observation is that, even if the government only cares about meeting the budget constraint in expectation, and is allowed to assign each patient to several hospitals probabilistically (with the total probability summing up to 1), the optimal social welfare it can get in expectation will just be the same as the optimal one obtained by deterministic assignments.
This is so because, at equilibrium, all the hospitals to which a patient $P_j$ is assigned with positive probability must yield the same utility for him. Thus assigning $P_j$ deterministically to the one with the smallest cost leads to another equilibrium assignment with the same social welfare and still meeting the budget constraint. Accordingly,
to maximize social welfare it suffices to consider only {\em deterministic assignments}.

The following theorem shows that even the optimal deterministic assignments are hard to find in general.

\begin{theorem}\label{thm:np}
Finding optimal equilibrium assignments is $NP$-hard.
\end{theorem}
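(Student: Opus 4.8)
The plan is to reduce from a known NP-hard problem to finding optimal equilibrium assignments. The natural candidate is a subset-sum / partition-style problem, since the budget constraint $\sum_i \lambda_i c_i \le B$ together with the integrality of the $\lambda_i$ and $c_i$ already has the flavor of a knapsack, and the welfare objective gives us a second numerical quantity to control. I would reduce from a partition-like problem — say, given positive integers $a_1,\dots,a_n$ with $\sum_j a_j = 2S$, decide whether some subset sums to exactly $S$ — and build an instance of {\sc Provision-after-Wait} in which an equilibrium assignment achieves a target welfare if and only if the partition instance is a YES instance.

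The construction I have in mind uses one ``gadget'' per element $a_j$. The key mechanism to exploit is the tension the paper has already highlighted: waiting time is welfare-burning, but routing a patient to a cheaper hospital may force a nonzero waiting time at the more expensive hospital (to keep it from being over-demanded within budget), exactly as in the three-patient example in Section~\ref{sec:paw}. So for each $j$ I would introduce a pair of hospitals (or reuse two global hospitals, one cheap and one expensive, together with a dedicated patient) whose cost difference encodes $a_j$, arranged so that in any equilibrium the planner is effectively forced to make a binary decision for element $j$ — ``spend the extra $a_j$ of budget to send patient $j$ to the expensive hospital and save the corresponding welfare-burn'' versus ``don't'' — and these $n$ binary decisions are independent. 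The budget $B$ is set so that the total extra spending is affordable iff the chosen subset sums to at most $S$; the welfare target is set (using valuations $v_{ij}$ tuned so the forgone-welfare penalties mirror the $a_j$'s) so that it is met iff the chosen subset sums to at least $S$. Together these force an exact partition. I would verify that (i) every feasible assignment induces, in equilibrium, exactly such a subset choice — here I would appeal to the paper's earlier observations that it suffices to consider deterministic assignments, and to the quasi-linear structure which pins down the equilibrium waiting times once the routing is fixed (each over-demanded expensive hospital's waiting time equals the gap between the values of the marginal in/out patients); and (ii) conversely every subset gives rise to a genuine equilibrium assignment of the claimed welfare, by setting the ``off'' hospitals' waiting times high enough (at least $\max_{i,j} v_{ij}$) that no patient wants to deviate, while keeping the ``on'' hospitals' waiting times at the forced equilibrium values. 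All numbers in the construction are polynomially bounded in the input, so the reduction is polynomial-time.

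The main obstacle I anticipate is \emph{decoupling the gadgets}: in a multi-hospital instance, a single patient's deviation is constrained only by the vector of all waiting times, so a naive construction risks spurious equilibria where the planner trades budget between gadget $j$ and gadget $j'$ in a way that has no partition interpretation, or where a patient's best response couples two gadgets together. To prevent this I would give each gadget its own private expensive hospital with a dedicated patient whose valuations are $0$ for every hospital outside its gadget (so that patient is never tempted to wander), and make the cheap hospital a single shared ``hospital $H_{\min}$'' that, by assumption, has enough budget to hold everyone — this is exactly the dummy cheapest hospital the model already allows. With the gadgets thus isolated, the equilibrium conditions factor across $j$, the induced subset is well defined, and the welfare is additive over gadgets, so the budget-versus-welfare squeeze forces an exact partition. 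A secondary point to check carefully is that the equilibrium existence direction genuinely produces waiting times in $\bZ^+$ (not just rationals) — but since all valuations and costs are integers by the model's standing assumption, the forced waiting times, being differences of values, are automatically integral, so this causes no trouble.
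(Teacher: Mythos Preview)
Your proposal is correct and arrives at essentially the same construction as the paper: one private expensive hospital per element with a dedicated patient who values only that hospital (and $0$ elsewhere), plus a shared zero-cost dummy hospital, so that the equilibrium decisions decouple into independent binary include/exclude choices. The paper reduces directly from {\sc Knapsack} (with independent item values $v_i$ and costs $c_i$) rather than {\sc Partition}, which streamlines the argument---the optimal equilibrium welfare simply equals the optimal knapsack value, so there is no need to squeeze the subset sum from both sides---but the gadget and the decoupling mechanism are exactly what you describe.
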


\begin{proof}
%The proof is by reduction. Namely, we show that if we can solve the optimal equilibrium assignment problem then we can solve another $NP$-hard problem in about the same time, concluding that the former is also $NP$-hard.
%
%The reduction is from the classical knapsack problem \cite{K72}, which is well known to be $NP$-hard.
%
The reduction is from the knapsack problem, which is well known to be $NP$-hard. In this problem there are $k$ items, $a_1,\dots, a_k$, and each $a_i$ has value $v_i$ and cost $c_i$. We are also given a budget $B$, and the goal is to select a subset of items so as to maximize their total value while keeping their total cost less than or equal to $B$.

We can transform this problem to a {\sc Provision-after-Wait} problem with $k+1$ hospitals and $k$ patients. Each hospital $H_i$ with $1\leq i\leq k$ has cost $c_i$, and each patient $P_i$ has value $v_i$ for $H_i$ and 0 for all others. Hospital $H_{k+1}$ has cost 0 and is valued 0 by all patients. The government has budget $B$.

Given an equilibrium assignment $(w, h, \lambda)$ to the {\sc Provision-after-Wait} problem, we can construct a solution to the knapsack problem with total value equal to $SW(w, h, \lambda)$ ---the set $A = \{i: h(i) = i\}$ is such a solution.
Indeed, without loss of generality we can assume $h(i) = k+1$ whenever $h(i)\neq i$. By the definition of equilibrium assignments, we can also assume $w_{k+1} = 0$, $w_i=v_i$ if $h(i) = k+1$, and $w_i=0$ otherwise.
Thus $SW(w, h, \lambda) = \sum_{i\in A} v_i$, which is the total value of $A$ in the knapsack problem.
As the total cost of $(w, h, \lambda)$ is $\sum_{i\in A} c_i \leq B$, the set $A$ meets the budget constraint in the knapsack problem.

It is easy to see that the other direction is also true, that is, given a solution $A\subseteq [k]$ to the knapsack problem, we can construct an equilibrium assignment $(w, h, \lambda)$ for the {\sc Provision-after-Wait} problem whose social welfare equals the total value of $A$.

Accordingly, an optimal equilibrium assignment to {\sc Provision-after-Wait} corresponds to an optimal solution to knapsack.
\end{proof}

\begin{remark}
The NP-hardness of the knapsack problem comes from the need for integrality. Its fractional version can be easily solved using a greedy bang-per-buck approach.
But this is not the case in our problem. Indeed, as we have noted, given a fractional equilibrium assignment we can construct a deterministic equilibrium assignment with the same social welfare. Thus for our problem the fractional version is as hard as the integral version.
%So in a sense, the reduction from knapsack is not touching the essential difficulty of our problem.
%An interesting question is the following: in our problem, let $M$ be the maximum of all the $v_{ij}$'s. Is there an algorithm that is polynomial in $n$, $m$, and $M$? Note that such a ``unary'' version is easy for knapsack.
\end{remark}

\section{Approximating Optimal Equilibrium Assignments with \\ Arbitrarily Small Deficit}\label{sec:approximation}

Although the optimization problem is hard when both the numbers of patients and hospitals are large, in practice we expect the number of hospitals to be small, and it makes sense to solve the problem efficiently in this case.
%. Therefore it makes sense to assume that this number is a constant and try to solve the problem efficiently in this case.

%Although the optimization problem is hard when both the number of patients and the number of hospitals can be big, in reality what usually happens is that a small number of hospitals, say 20, serve a huge number of patients, say ten thousands. Therefore it makes sense to solve the problem when the number of hospitals $k$ is constant.
%Below we let $k\geq 2$ be this constant (to distinguish from the variable $k$ previously used).

An easy observation is that optimal equilibrium assignments can be found in time $O(m^k \mbox{poly}(m, k))$. Indeed, there are at most $m^k$ possible assignment functions $h: [m] \rightarrow [k]$. For each $h$ and the corresponding quota vector $\lambda$ satisfying $\sum_{i\in [k]} c_i \lambda_i \leq B$, the total value of the patients are fixed, and thus maximizing social welfare is equivalent to minimizing total waiting time. Accordingly, the best equilibrium waiting time vector given $h$ and $\lambda$ can be found using the linear program below (or one can prove that no feasible waiting time vector exists at equilibrium).
\begin{eqnarray*}
 & & \min_w \sum_{i\in [k]} w_i \lambda_i \\
 & \mbox{s.t. } & \forall j\in [m], i\in [k], v_{h(j)j} - w_{h(j)} \geq v_{ij} - w_i.
% & & \sum_{i\in [k]} c_i \lambda_i \leq B.
\end{eqnarray*}
We then choose $h$ such that the corresponding equilibrium assignment $(w, h, \lambda)$ maximizes social welfare.

Given the above observation, we are interested in replacing the $m^k$ part with a better bound.
As we shall show, if the government is willing to violate its budget constraint by an arbitrarily small fraction, then the problem can be solved much more efficiently.

%As we shall show, if the government is willing to break the budget constraint within some arbitrarily small fraction, say 1 percent, then the problem can be solved much more efficiently.

\begin{definition}
Let $\epsilon$ be a positive constant. An assignment $(w, h, \lambda)$ is an {\em equilibrium assignment with $\epsilon$-deficit} if it is an equilibrium assignment with the feasibility condition replaced by the following condition: $\sum_{i\in [k]} \lambda_i c_i \leq (1+\epsilon) B$.
%An equilibrium assignment with $\epsilon$-deficit $(\lambda, w, h)$ is {\em optimal} if its social welfare is greater than or equal to that of any other equilibrium assignment with $\epsilon$-deficit.
\end{definition}

%Note that an equilibrium assignment is also an equilibrium assignment with $\epsilon$-deficit. Thus the maximum social welfare achievable at the latter is at least $SW_{OEA}$, the maximum social welfare achievable at the former.
%By definition, the social welfare of an optimal equilibrium assignment with $\epsilon$-deficit is at least that of an optimal equilibrium assignment.

We shall construct an algorithm that, in time $O(\log_{1+\epsilon}^k m \cdot (1+\epsilon)^3 m^4)$, finds an equilibrium assignment with $\epsilon$-deficit whose social welfare is at least $SW_{OEA}$, the social welfare of the optimal equilibrium assignments with budget $B$. To do so, we first establish a strong connection between the {\sc Provision-after-Wait} problem and the well-studied problem of unit-demand auctions (see, e.g., \cite{DGS86, AMPP09, ABH09, EK10}).

\subsection{A connection between the Provision-After-Wait problem and unit-demand auctions}

A unit-demand auction is specified by $n$ goods (perhaps including identical ones), $m$ buyers, and the values $v_{ij}$ of each buyer $j\in [m]$ for each good $i\in [n]$. The goal is to find an equilibrium allocation and prices, where each buyer gets the good that maximizes his utility given the prices.
%Such allocation and corresponding prices is called a {\em competitive equilibrium} of the auction.

If we consider the patients in the {\sc Provision-after-Wait} problem as buyers who want to buy hospital services using waiting times, our setting looks a lot like a unit-demand auction. Except one thing: in our setting the set of goods for sale is unknown. It is natural to consider the $k$ hospitals as $k$ goods, but each one of them has to have certain amount of identical copies, as each hospital may serve more than one patients. One cannot simply model the hospitals as $k$ goods with $m$ copies each, as then the resulted auction will give each patient his favorite hospital with zero waiting time, and the budget constraint may be broken.

Notice that, if we were given the quota vector $\lambda$ in the optimal equilibrium solution of the {\sc Provision-after-Wait} problem, then we can consider each hospital $H_i$ as $\lambda_i$ copies of identical goods, and we have a well defined unit-demand auction.
Every equilibrium solution to this auction leads to an assignment function $h$ and a waiting time vector $w$, such that $(w, h, \lambda)$ is an equilibrium assignment to the original {\sc Provision-after-Wait} problem. In particular, the budget constraint is satisfied automatically, since we started with a quota vector that meets the budget constraint.

In general, for any quota vector $\lambda$ such that $\sum_i \lambda_i\geq m$, the problem of finding equilibrium assignments with respect to $\lambda$ reduces to finding equilibrium prices and allocations in unit-demand auctions where each hospital $H_i$ corresponds to $\lambda_i$ identical goods. If $\lambda$ meets the budget constraint, namely, $\sum_i c_i \lambda_i\leq B$, then the resulting equilibrium assignment meets the budget constraint.

It is well known that a unit-demand auction always has equilibrium prices and allocations, which can be found by the Hungarian method \cite{K55}.
%It is then intuitive that such prices will correspond to the waiting times of the hospitals, and the allocation of the goods will correspond to the assignment function.
The only caution is that, for a hospital to have a well-defined waiting time, the prices of its corresponding goods in the unit-demand auction must be all the same.
Fortunately, as will become clear in Section \ref{subsec:auction}, at equilibrium identical goods must always have the same price, although this is not explicitly required.
%Fortunately from the details of the Hungarian method one can see that this is indeed the case.

Therefore for each quota vector $\lambda$, whether it meets the budget constraint or not, there exists an equilibrium assignment with respect to $\lambda$. Following the result of \cite{AMPP09}, the optimal equilibrium assignment with respect to $\lambda$ can be computed efficiently, and this will lead to our algorithm for approximating the optimal equilibrium solution of the {\sc Provision-after-Wait} problem.\footnote{Although equilibrium assignments can be efficiently computed given $\lambda$, the problem of deciding the ``correct'' $\lambda$ makes the {\sc Provision-after-Wait} problem hard, even in very special cases, as shown in Section \ref{sec:np hard}.}
%(Notice that it does not suffice to solve a multi-item auction instance where each hospital corresponds to $m$ identical goods, as the allocation may lead to an assignment violating the budget constraint.)

\subsection{A useful result in multi-unit auctions}\label{subsec:auction}

Our algorithm uses that of \cite{AMPP09} for unit-demand auctions as a black box, therefore we first recall their result (while using our notation to help establish the connection with our results).

%\subsection{Computing the Bidder-Optimal Stable Matching in Multi-Unit Auctions \cite{AMPP09}}
\begin{definition}
A {\em unit-demand auction}, or simply an {\em auction} in this paper, is a triple $(g, m, v)$, where the set of goods is $\{1, 2, \dots, g\}$, the set of bidders is $\{1, 2, \dots, m\}$, and $v$ is the {\em valuation matrix}, that is, a $g\times m$ matrix of non-negative integers. Each $v_{ij}$ denotes the valuation of bidder $j$ for good $i$.

Given an auction $(g, m, v)$, a {\em matching} is a triple $(u, p, \mu)$, where $u = (u_1, \dots, u_m)\in (\bZ^+)^m$ is the {\em utility vector}, $p = (p_1, \dots, p_g)\in (\bZ^+)^g$ is the {\em price vector}, and $\mu \subseteq [g]\times [m]$ is a set of good-bidder pairs such that no bidder and no good occur in more than one pair. Bidders and goods that do not appear in any pair in $\mu$ are {\em unmatched}.
\end{definition}

\begin{definition}
Given an auction $(g, m, v)$, a matching $(u, p, \mu)$ is {\em weakly feasible} if for each $(i,j)\in \mu$ we have $u_j = v_{ij} - p_i$, and for each unmatched bidder $j$ we have $u_j=0$.

A matching $(u, p, \mu)$ is {\em feasible} if it is weakly feasible and for each unmatched good $i$ we have $p_i=0$.

A matching $(u, p, \mu)$ is {\em stable} if for each $(i, j)\in [g]\times[m]$ we have $u_j \geq v_{ij} - p_i$.

A matching $(u^*, p^*, \mu^*)$ is {\em bidder-optimal} if: (1) it is stable and feasible, and (2) for every matching $(u, p, \mu)$ that is stable and weakly feasible, and for every bidder $j$, we have $u_j^* \geq u_j$.
\end{definition}

In \cite{AMPP09} the authors construct an algorithm, {\sc StableMatch}, which, given an auction $(g, m, v)$, outputs a bidder-optimal matching $(u^*, p^*, \mu^*)$ in time $O(mg^3)$.

%\begin{remark}
Notice that the original definitions in \cite{AMPP09} have for each good-bidder pair a reserve price and a maximum price. In our model we do not need them, so the definitions above are more succinct than the original ones. In fact, as pointed out by \cite{AMPP09}, with maximum prices, there may be no bidder-optimal matching. But without them such a matching always exists, as shown by \cite{DGS86}.
%\end{remark}

%In other words, a bidder-optimal matching simultaneously maximizes all bidders' utilities among stable feasible matchings.
%\begin{remark}
Notice also that \cite{AMPP09} does not distinguish between weak feasibility and feasibility. But it is easy to see that their algorithm and its analysis still apply under our definitions. We shall use these two notions when analyzing our algorithm.
%\end{remark}

% See their Theorem 5.%
%\footnote{Moreover, their Theorem 5 requires the auction in a ``general position'', so that there is a unique bidder-optimal matching. But as pointed out by the authors, this requirement can be easily removed by using a consistent tie-breaking rule.}

\medskip

Next we establish two properties for the matching $(u^*, p^*, \mu^*)$ output by  {\sc StableMatch}.

\begin{itemize}

\item
{\em Property 1.} If $g\geq m$, then without loss of generality we can assume that $(u^*, p^*, \mu^*)$ has no unmatched bidder.

Indeed, if there exists an unmatched bidder $j$, then there must exist an unmatched good $i$ (since $g\geq m$). Since $(u^*, p^*, \mu^*)$  is bidder-optimal, we have $u^*_j = 0$, $p_i^* = 0$, and $u^*_j \geq v_{ij} - p^*_i$. Thus we have $v_{ij} = 0$, and the matching $(u^*, p^*, \mu^*\cup\{(i,j)\})$ is another bidder-optimal matching.

\medskip

\item
{\em Property 2.} If two goods $i, i'$ are identical, namely, $v_{ij} = v_{i'j}$ for each bidder $j$, then $p^*_i = p^*_{i'}$.

Indeed, if both goods are unmatched then $p^*_i = p^*_{i'}=0$. Otherwise, say $(i, j)\in \mu^*$. By definition, $u^*_j = v_{ij} - p^*_i \geq v_{i'j} - p^*_{i'}$. As $v_{ij} = v_{i'j}$, we have $p^*_i \leq p^*_{i'}$. If $i'$ is unmatched then $p^*_{i'}=0$, implying $p^*_i=0$. If $(i', j')\in \mu^*$ then similarly we have $p^*_{i'}\leq p^*_i$, and thus $p^*_i = p^*_{i'}$ again.
\end{itemize}

\subsection{Our algorithm for approximating optimal equilibrium assignments}

Now we are ready to construct our algorithm for approximating optimal equilibrium assignments. The algorithm takes as input the number of patients $m$, the number of hospitals $k$, the hospitals' costs $c_1,\dots, c_k$, the patients' valuations $v_{ij}$'s for the hospitals, the budget $B$, and a small constant $\epsilon>0$.
Letting $(w, h, \lambda)$ be an optimal equilibrium assignment, the algorithm works by guessing $\lambda$, constructing a multi-unit auction based on the guessed vector, computing the bidder-optimal matching using  {\sc StableMatch}, and extracting the waiting time vector and the assignment function from the matching.

More precisely, let $L \triangleq \lceil \log_{1+\epsilon} m \rceil$, $C_0 \triangleq 0$, and $C_\ell \triangleq \lfloor (1+\epsilon)^\ell \rfloor$ for each $\ell = 1, \dots, L$. The algorithm examines all the vectors $\hat{\lambda} = (\hat{\lambda}_1,\dots, \hat{\lambda}_k) \in \{C_0, C_1, \dots, C_L\}^k$ one by one, say lexicographically. % (in fact any deterministic order will work).

If $\sum_{i\in [k]} \hat{\lambda}_i \not\in [m, (1+\epsilon)m]$ or if $\sum_{i\in [k]} \hat{\lambda}_i c_i > (1+\epsilon)B$, the algorithm disregards this vector and moves to the next. Otherwise it constructs an auction $(g, m, \hat{v})$ as follows. The set of patients corresponds to the set of bidders; each hospital $H_i$ corresponds to $\hat{\lambda}_i$ copies of identical goods $H_{i1},\dots, H_{i\hat{\lambda}_i}$, thus $g = \sum_{i\in [k]} \hat{\lambda}_i$; the valuation matrix $\hat{v}$ has rows indexed by $\{ir: i\in [k], r\in [\hat{\lambda}_i]\}$, columns indexed by $[m]$, and for each $j\in [m]$, $i\in [k]$, and $r\in [\hat{\lambda}_i]$, $\hat{v}_{ir, j} = v_{ij}$.

The algorithm then runs  {\sc StableMatch} with input $(g, m, \hat{v})$ to generate the bidder-optimal matching $(u^*, p^*, \mu^*)$, and extracts the waiting time vector $\hat{w}$ and the assignment function $\hat{h}$ as follows. For each hospital $H_i$, let $\hat{w}_i = p^*_{i1}$. For each patient $P_j$, let $H_{ir}$ be the unique good to which $P_j$ is matched (by Property 1 in Section \ref{subsec:auction} such a good always exists) according to $\mu^*$, and let $\hat{h}(j) = i$.
The triple $(\hat{w}, \hat{h}, \hat{\lambda})$ may not be an assignment as $\sum_{i\in [k]} \hat{\lambda}_i$ may be larger than $m$, but there is a unique quota vector $\hat{\lambda}'$ such that $(\hat{w}, \hat{h}, \hat{\lambda}')$ is an assignment.

The algorithm computes the social welfare of the assignment $(\hat{w}, \hat{h}, \hat{\lambda}')$ for each $\hat{\lambda}$ that is not disregarded, and output the assignment $(w^*, h^*, \lambda^*)$ with the maximum social welfare.

\medskip

We prove the following theorem.
\begin{theorem}\label{thm:alg}
Our algorithm runs in time $O(\log_{1+\epsilon}^k m \cdot m^4)$, and outputs an equilibrium assignment with $\epsilon$-deficit, $(w^*, h^*, \lambda^*)$, such that $SW(w^*, h^*, \lambda^*)\geq SW_{OEA}$. %(1+\epsilon)^3
\end{theorem}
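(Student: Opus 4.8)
The plan is to check three things in turn, and all three reduce to bookkeeping around {\sc StableMatch} together with Properties 1 and 2 of Section \ref{subsec:auction}, the welfare bound being the only step that needs a genuine idea. For the running time: there are $(L+1)^k=O((\log_{1+\epsilon}m)^k)$ candidate vectors $\hat\lambda$, and a candidate that is not disregarded has $g=\sum_i\hat\lambda_i\le(1+\epsilon)m=O(m)$, so constructing the auction $(g,m,\hat v)$ costs $O(gm)=O(m^2)$, running {\sc StableMatch} costs $O(mg^3)=O(m^4)$, and reading off $\hat w$, $\hat h$ and the social welfare costs $O(gm)$; summing over all candidates yields $O((\log_{1+\epsilon}m)^k\cdot m^4)$, with the constant absorbing the $(1+\epsilon)^3$ factor from $g^3$.

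Next I would show that every extracted triple $(\hat w,\hat h,\hat\lambda')$ coming from a non-disregarded $\hat\lambda$ is an equilibrium assignment with $\epsilon$-deficit (hence so is the returned $(w^*,h^*,\lambda^*)$, being one of them). Since $g\ge m$, Property 1 lets us assume every patient is matched, so $\sum_i\hat\lambda'_i=m$; and since at most $\hat\lambda_i$ patients are matched to the $\hat\lambda_i$ copies of $H_i$, we get $\hat\lambda'_i\le\hat\lambda_i$, whence $\sum_i\hat\lambda'_i c_i\le\sum_i\hat\lambda_i c_i\le(1+\epsilon)B$, i.e. feasibility with $\epsilon$-deficit. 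If $P_j$ is matched to a copy $H_{ir}$, then by Property 2 that copy carries the common price $p^*_{i1}=\hat w_i$, so $u_j(\hat w,\hat h,\hat\lambda')=v_{ij}-\hat w_i=\hat v_{ir,j}-p^*_{ir}=u^*_j\ge 0$, giving individual rationality; and stability of $(u^*,p^*,\mu^*)$ gives $u_j(\hat w,\hat h,\hat\lambda')=u^*_j\ge\hat v_{i'1,j}-p^*_{i'1}=v_{i'j}-\hat w_{i'}$ for every $i'\in[k]$, which is exactly the equilibrium inequality. Note this also records $SW(\hat w,\hat h,\hat\lambda')=\sum_j u^*_j$.

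The crux is the welfare bound. Let $(w,h,\lambda)$ be an optimal equilibrium assignment under budget $B$, and set $\hat\lambda_i$ to be the smallest $C_\ell$ with $C_\ell\ge\lambda_i$; this exists since $C_L\ge m\ge\lambda_i$, and $\ell\ge1$ since $\lambda_i\ge1=C_1$. Then $\lambda_i\le\hat\lambda_i$, and from $C_{\ell-1}<\lambda_i$ a short check gives $\hat\lambda_i\le(1+\epsilon)\lambda_i$; summing, $\sum_i\hat\lambda_i\in[m,(1+\epsilon)m]$ and $\sum_i\hat\lambda_i c_i\le(1+\epsilon)B$, so this $\hat\lambda$ is examined and not disregarded. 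Now embed $(w,h,\lambda)$ into the auction $(g,m,\hat v)$ built from this $\hat\lambda$: because $|h^{-1}(i)|=\lambda_i\le\hat\lambda_i$, match the patients assigned to $H_i$ injectively to copies of $H_i$, give every copy of $H_i$ the price $w_i$, and set $u_j=v_{h(j)j}-w_{h(j)}$. This matching is weakly feasible (every patient is matched since $\sum_i\lambda_i=m$) and stable (the equilibrium condition for $(w,h,\lambda)$ is precisely $u_j\ge v_{ij}-w_i$ for all $i$). Hence bidder-optimality of the output of {\sc StableMatch} gives $u^*_j\ge u_j=u_j(w,h,\lambda)$ for every $j$, so the extracted assignment for this $\hat\lambda$ has social welfare $\sum_j u^*_j\ge\sum_j u_j(w,h,\lambda)=SW_{OEA}$; since the algorithm returns the welfare-maximal extracted assignment, $SW(w^*,h^*,\lambda^*)\ge SW_{OEA}$.

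I expect this last step to be the main obstacle: one must round the quota vector \emph{up} so that simultaneously (i) the rounded vector still satisfies the $\epsilon$-relaxed budget and the $[m,(1+\epsilon)m]$ cardinality window, hence is among the candidates actually tried, and (ii) the original optimum $(w,h,\lambda)$ persists as a stable, weakly feasible matching in the enlarged auction, so that the \emph{per-bidder} optimality guarantee of {\sc StableMatch} can be summed into a \emph{total}-welfare domination. The remaining steps are routine manipulations of the definitions.
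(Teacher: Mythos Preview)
Your proposal is correct and follows essentially the same approach as the paper's proof: the same running-time bookkeeping, the same verification via Properties~1 and~2 that each extracted assignment is an equilibrium with $\epsilon$-deficit (the paper packages this as a separate lemma), and the same welfare argument by rounding each $\lambda_i$ up to the nearest $C_\ell$, checking this $\hat\lambda$ survives the filters, and embedding $(w,h,\lambda)$ as a stable, weakly feasible matching so that bidder-optimality gives $u_j^*\ge u_j$ coordinatewise. One tiny slip: your justification ``$\ell\ge1$ since $\lambda_i\ge1=C_1$'' is garbled (and $C_1=\lfloor 1+\epsilon\rfloor$ need not equal $1$); what you actually need is $C_0=0<\lambda_i$, which is immediate from $\lambda_i\ge1$.
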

\begin{proof}
The running time of the algorithm can be immediately seen. Indeed, if a vector $\hat{\lambda}$ is not disregarded, then it takes $O(mg) = O(m^2)$ time to construct the auction as $g \in [m, (1+\epsilon)m]$, $O(mg^3) = O(m^4)$ time to run  {\sc StableMatch}, and $O(m)$ time to extract the assignment.
Accordingly, it takes % $O((1+\epsilon)m^2 + (1+\epsilon)^3 m^4 + m) = O((1+\epsilon)^3 m^4)$
$O(m^4)$
time to examine a single vector $\hat{\lambda}$, and there are $O(\log_{1+\epsilon}^k m)$ vectors in total.

\medskip

The remaining part of the theorem follows from the two lemmas below.

\begin{lemma}
$(w^*, h^*, \lambda^*)$ is an equilibrium assignment with $\epsilon$-deficit.
\end{lemma}
\begin{proof}
In fact, we show that for each vector $\hat{\lambda}$ that is not disregarded, the extracted assignment $(\hat{w}, \hat{h}, \hat{\lambda}')$ is an equilibrium assignment with $\epsilon$-deficit. To see why this is true, first notice that $\sum_{i\in [k]} \hat{\lambda}_i c_i \leq (1+\epsilon)B$ by the construction of the algorithm, thus
\begin{equation}\label{equ:5_1}
\sum_{i\in [k]} \hat{\lambda}'_i c_i \leq \sum_{i\in [k]} \hat{\lambda}_i c_i \leq (1+\epsilon)B.
\end{equation}

Second, for each $j\in [m]$, letting $H_{\hat{h}(j)r}$ be the good matched to $P_j$ according to $\mu^*$, we have
\begin{equation}\label{equ:5_2}
u_j(\hat{w}, \hat{h}, \hat{\lambda}') = v_{\hat{h}(j)j} - \hat{w}_{\hat{h}(j)} = \hat{v}_{\hat{h}(j)r, j} - p^*_{\hat{h}(j)1} = \hat{v}_{\hat{h}(j)r, j} - p^*_{\hat{h}(j)r} = u^*_j \geq 0,
\end{equation}
where the third equality is because of Property 2 in Section \ref{subsec:auction} (in particular, $H_{\hat{h}(j)1}$ and $H_{\hat{h}(j)r}$ are identical goods, and $p^*_{\hat{h}(j)1} = p^*_{\hat{h}(j)r}$), and the other equalities/inequality are by definition.

Third, since $(u^*, p^*, \mu^*)$ is a bidder-optimal matching for auction $(g, m, \hat{v})$, we have that for each $j\in [m]$, $i\in [k]$, and $r\in [\hat{\lambda}_i]$,
$$u^*_j \geq \hat{v}_{ir, j} - p^*_{ir} = v_{ij} - p^*_{i1} = v_{ij} - \hat{w}_i,$$
and thus
\begin{equation}\label{equ:5_3}
u_j(\hat{w}, \hat{h}, \hat{\lambda}') = u^*_j \geq v_{ij} - \hat{w}_i.
\end{equation}

Equations \ref{equ:5_1}, \ref{equ:5_2}, and \ref{equ:5_3} together imply that every $(\hat{w}, \hat{h}, \hat{\lambda}')$ is an equilibrium assignment with $\epsilon$-deficit, and so is $(w^*, h^*, \lambda^*)$.
\end{proof}

\begin{lemma}
$SW(w^*, h^*, \lambda^*)\geq SW_{OEA}$.
\end{lemma}
\begin{proof}
To see why this is true, arbitrarily fix an optimal equilibrium assignment $(w, h, \lambda)$. Notice that for each hospital $H_i$, there exists a ``good guess'' $\hat{\lambda}_i \in \{C_0,\dots, C_L\}$ such that
$$\lambda_i \leq \hat{\lambda}_i \leq (1+\epsilon) \lambda_i.$$
Since $\lambda$ satisfies $\sum_{i\in [k]} \lambda_i = m$ and $\sum_{i\in [k]} \lambda_i c_i\leq B$, the vector $\hat{\lambda} = (\hat{\lambda}_1,\dots, \hat{\lambda}_k)$ satisfies
$$\sum_{i\in [k]} \hat{\lambda}_i \in [m, (1+\epsilon)m] \quad \mbox{ and } \quad  \sum_{i\in [k]} \hat{\lambda}_i c_i \leq (1+\epsilon)B.$$
Thus it won't be disregarded by the algorithm.
Let $(g, m, \hat{v})$ be the auction constructed from $\hat{\lambda}$, $(u^*, p^*, \mu^*)$ the output of  {\sc StableMatch} under input $(g, m, \hat{v})$, and $(\hat{w}, \hat{h}, \hat{\lambda}')$ the assignment extracted from $(u^*, p^*, \mu^*)$. Following the same reasoning as in Equation \ref{equ:5_2}, we have that for each $j\in [m]$,
$u_j(\hat{w}, \hat{h}, \hat{\lambda}') = u^*_j$. Thus
\begin{equation}\label{equ:5_4}
SW(\hat{w}, \hat{h}, \hat{\lambda}') = \sum_{j\in [m]} u^*_j.
\end{equation}

From $(w, h, \lambda)$, we construct a matching $(u, p, \mu)$ for the auction $(g, m, \hat{v})$ as follows. For each bidder $j$, we have $u_j = v_{h(j)j} - w_{h(j)}$; for each good $H_{ir}$ with $i\in [k]$ and $r\in [\hat{\lambda}_i]$, we have $p_{ir} = w_i$; and for each hospital $H_i$, letting $j_1 \leq j_2 \leq \cdots \leq j_{\lambda_i}$ be the patients assigned to $H_i$ by $h$, we have $\mu = \{(j_r, ir): i\in [k], r\in [\lambda_i]\}$.

It is easy to verify that the so constructed $(u, p, \mu)$ is stable and weakly feasible, thus by the optimality of $u^*$ we have that for each $j\in [m]$,
\begin{equation}\label{equ:5_5}
u^*_j \geq u_j.
\end{equation}
Moreover, for the same reason as Equation \ref{equ:5_4}, we have
\begin{equation}\label{equ:5_6}
SW(w, h, \lambda) = \sum_{j\in [m]} u_j.
\end{equation}
Equations \ref{equ:5_4}, \ref{equ:5_5}, and \ref{equ:5_6} together imply
$$SW(\hat{w}, \hat{h}, \hat{\lambda}') \geq SW(w, h, \lambda) = SW_{OEA}$$
as we want to show.
\end{proof}

In sum, Theorem \ref{thm:alg} holds.
\end{proof}

\paragraph{Remark.} By running our algorithm with input budget $B/(1+\epsilon)$, we obtain an assignment whose budget is at most $B$ and whose social welfare is at least the optimal social welfare with budget $B/(1+\epsilon)$. However, this social welfare may be much smaller than the optimal social welfare with budget $B$. That is why we insist on having a deficit instead of meeting the budget constraint strictly.

\section{The Endogenous Emergence of Waiting Times}\label{sec:dynamics}

Next we study the dynamics between hospitals and patients. As we shall consider continuous changes of waiting times, below the patients' valuations and the waiting times can be any non-negative reals, not necessarily integers.We show that in our model, when the patients' valuations are in some generic position, the only thing the government needs to enforce is the amount of money it is willing to pay to each hospital, which can be equivalently enforced by the quota vector. Given the quotas, the optimal waiting times and the optimal social welfare will emerge endogenously from the dynamics.

\subsection{The uniqueness of the optimal equilibrium}

We start by defining the generic position of the patients and studying the structure of the optimal equilibrium under it. Following \cite{ABH09}, we have the following.
\begin{definition}
The patients $\{P_1,\dots, P_m\}$ with valuations $(v_{ij})_{i\in [k], j\in [m]}$ are {\em independent} if, there do not exist two different subsets $S$ and $T$ of the multiset $\{v_{ij}: i\in [k], j\in [m]\}$ such that, both $S$ and $T$ contains positive numbers and $\sum_{v\in S} v = \sum_{v'\in T}v'$.
\end{definition}

Notice that the above definition of independent patients is weaker than the typical definition of generic position, which rules out any relevant equality relation among the valuations.
Notice also that it is easy to perturb the numbers in the proof of Theorem \ref{thm:np} so that the resulted Provision-after-Wait problem is generic. Thus the optimization problem is still NP-hard in the generic case. But our results below apply to any $\lambda$, which may be obtained via approximation algorithms or heuristics.

%We now explore the structure of optimal equilibrium waiting time vectors when the patients are independent.
Let $\lambda$ be a quota vector with $\sum_{i\in [k]} \lambda_i \geq m$.\footnote{Notice that we do not require that $\lambda$ satisfies the budget constraint, and our results apply to such $\lambda$s as well.}
Recall that given $\lambda$, the {\sc Provision-after-Wait} problem reduces to a unit-demand auction. Thus following \cite{SS72, DGS86}, among all equilibrium waiting time vectors with respect to $\lambda$, there is a unique one that simultaneously minimizes the waiting time at each hospital and maximizes the utility of each patient.%
\footnote{Notice that this is the waiting time vector computed by the {\sc StableMatch} algorithm of \cite{AMPP09}.}
Denoting this minimum waiting time vector by $\bar{w}$, we prove the following theorem.

\begin{theorem}\label{thm:unique}
Assuming the patients are independent, there is a unique equilibrium assignment with respect to $\lambda$ and $\bar{w}$. Moreover, denoting this equilibrium by $(\bar{w}, \bar{h}, \lambda)$, we have that $\min_{i\in [k]} \bar{w}_i=0$, and that at this equilibrium every hospital with positive waiting time is saturated, namely, $|\bar{h}^{-1}(i)| = \lambda_i$ whenever $\bar{w}_i>0$.
\end{theorem}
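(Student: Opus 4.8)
The plan is to establish, in this order: (a) in any equilibrium assignment with respect to $\lambda$ whose waiting time vector equals $\bar{w}$, every hospital with positive waiting time is saturated; (b) $\min_{i\in[k]}\bar{w}_i=0$; and (c) the equilibrium assignment with waiting time vector $\bar{w}$ is unique. Only (c) will use that the patients are independent. The common input is that an equilibrium assignment with respect to $\lambda$ is precisely a competitive equilibrium of the unit-demand auction of Section~\ref{subsec:auction} in which each $H_i$ contributes $\lambda_i$ identical copies; such an equilibrium is \emph{feasible}, so every unmatched good is priced $0$, and, as in Property~2, identical copies always carry a common price. Hence for (a): if $H_i$ is not saturated then one of its $\lambda_i$ copies is unmatched and priced $0$, so all copies of $H_i$ are priced $0$, i.e. $\bar{w}_i=0$; contrapositively, $\bar{w}_i>0$ forces $|\bar{h}^{-1}(i)|=\lambda_i$.

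\textbf{Step (b).} If some copy is unmatched in the equilibrium, (a) already exhibits a hospital with waiting time $0$, so $\min_i\bar{w}_i=0$. Otherwise the matching is perfect, and I would use that $\bar{w}$ is the \emph{coordinatewise-minimum} equilibrium waiting time vector (the one output by {\sc StableMatch}; see \cite{SS72,DGS86}): if $\delta:=\min_i\bar{w}_i>0$, then replacing $\bar{w}$ by $\bar{w}-\delta\cdot(1,\dots,1)\ge 0$ and keeping the same (still perfect, hence automatically feasible) assignment leaves the no-deviation inequalities $v_{h(j)j}-\bar{w}_{h(j)}\ge v_{ij}-\bar{w}_i$ unchanged and only relaxes $u_j\ge 0$, producing an equilibrium assignment with waiting time vector strictly below $\bar{w}$ --- contradicting minimality. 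So $\delta=0$.

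\textbf{Step (c).} Suppose $(\bar{w},h,\lambda)$ and $(\bar{w},h',\lambda)$ are equilibrium assignments with $h\ne h'$, chosen so that $|\{j:h(j)\ne h'(j)\}|$ is minimal. Form the directed multigraph $G$ on $\{H_1,\dots,H_k\}$ with one edge $H_{h(j)}\to H_{h'(j)}$, labelled $j$, for each $j$ with $h(j)\ne h'(j)$. At $H_i$ one has $(\text{out-degree})-(\text{in-degree})=|h^{-1}(i)|-|h'^{-1}(i)|$; a vertex with larger out-degree is unsaturated under $h'$ and one with larger in-degree is unsaturated under $h$, so by (a) every such ``unbalanced'' vertex has waiting time $0$. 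Since $h\ne h'$, $G$ has an edge, hence (by the usual decomposition into edge-disjoint cycles and out-surplus-to-in-surplus paths) it contains a directed cycle or a directed path whose two endpoints are unbalanced. Traversing it as $H_{i_0}\to\cdots\to H_{i_t}$ with edge $\ell$ labelled $j_\ell$ (so $h(j_\ell)=i_{\ell-1}$, $h'(j_\ell)=i_\ell$), patient $P_{j_\ell}$ is matched to $H_{i_{\ell-1}}$ in one equilibrium and to $H_{i_\ell}$ in the other while receiving the same utility (utilities are pinned down by $\bar{w}$), whence
\[
v_{i_{\ell-1}j_\ell}-\bar{w}_{i_{\ell-1}}=v_{i_\ell j_\ell}-\bar{w}_{i_\ell}\qquad(\ell=1,\dots,t).
\]
Summing over $\ell$, the waiting-time terms telescope to $\bar{w}_{i_0}-\bar{w}_{i_t}$, which is $0$ (either $i_0=i_t$, or both endpoints are unbalanced and hence have waiting time $0$ by (a)); thus $\sum_\ell v_{h(j_\ell)j_\ell}=\sum_\ell v_{h'(j_\ell)j_\ell}$. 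The two collections of valuation-matrix entries on the two sides are disjoint (the $j_\ell$ are distinct and $h(j_\ell)\ne h'(j_\ell)$) and each is nonempty. If one of them had no positive entry, both would be identically $0$, forcing each $P_{j_\ell}$ to be matched, in both equilibria, to a hospital it values at $0$ and hence to have utility $0$; but then rerouting $h'$ onto $h$ along this cycle/path preserves feasibility and every equilibrium condition while strictly shrinking $|\{j:h(j)\ne h'(j)\}|$ --- contradiction. So both collections contain a positive entry and have equal sums, contradicting the independence of the patients. Therefore $h=h'$.

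\textbf{The main obstacle.} The substance is all in Step (c): arranging the decomposition of $G$ so that the waiting times cancel upon summation (which is exactly where (a) and the source/sink bookkeeping are needed), and making sure the resulting sum-equality genuinely contradicts the \emph{weak} genericity in the definition of independent patients --- which is what the minimal-disagreement choice of $h'$ together with the all-zero case analysis take care of. Steps (a) and (b) become short once equilibrium assignments with respect to $\lambda$ are identified with competitive equilibria of the auction (feasibility together with the common-price property of identical copies).
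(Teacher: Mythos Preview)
There is a real gap in Step (a). You assert that an equilibrium assignment with respect to $\lambda$ is ``precisely a competitive equilibrium of the unit-demand auction \dots\ such an equilibrium is \emph{feasible}, so every unmatched good is priced $0$.'' But an equilibrium assignment in the paper's sense corresponds only to a \emph{stable, weakly feasible} matching in the terminology of Section~\ref{subsec:auction}; nothing in the definition of equilibrium assignment forces unused hospital slots to carry waiting time $0$. Feasibility is a special property of the bidder-optimal matching, not of every stable matching sharing the price vector $\bar w$. In fact (a) is false without independence: with $k=2$, $m=2$, $\lambda=(1,2)$ and $v_{1j}=10$, $v_{2j}=7$ for both $j$, one checks that $\bar w=(3,0)$ is the minimum equilibrium waiting time, and sending both patients to $H_2$ is a perfectly good equilibrium assignment in which $H_1$ is empty while $\bar w_1=3>0$. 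So the claim ``only (c) will use that the patients are independent'' cannot stand.

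This matters for Step (c), which uses (a) for \emph{both} $h$ and $h'$: the path endpoints $i_0$ (unsaturated under $h'$) and $i_t$ (unsaturated under $h$) must both have waiting time $0$ for the telescoping sum to vanish. Fixing $h'$ to be the bidder-optimal $h^*$ rescues only $i_0$. Nor does the rerouting trick by itself replace independence: when the path is a single edge (as in the example above), rerouting $h'$ simply yields $h''=h$ and there is no smaller-disagreement pair to contradict. (Step (b) is fine once you drop the case split: the subtract-$\delta$ argument works verbatim whether or not the matching is perfect, so (b) does not actually need (a).)

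For comparison, the paper proceeds quite differently. It works with the bipartite demand graph on hospitals \emph{and} patients, uses independence from the outset to show this graph is a forest (Claim~\ref{clm:cycle}) in which each component contains exactly one waiting-time-$0$ hospital (Claims~\ref{clm:one}--\ref{clm:zero}), and then a degree bound (Claim~\ref{clm:degree}) drives a bottom-up argument along each tree that yields uniqueness and saturation simultaneously. Your directed-graph-on-hospitals idea with telescoping is a genuinely different and appealing route, and it should go through once (a) is established under the independence hypothesis --- but as written, (a) is the missing load-bearing piece.
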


\begin{proof}
Without loss of generality, we assume $\lambda_i>0$ for each $i\in [k]$. Consider the demand graph $G$ given $\bar{w}$, that is, a bipartite graph with $k$ nodes on one side for the hospitals and $m$ nodes on the other side for the patients. For each $i\in [k]$ and $j\in [m]$, the edge $(i, j)$ is in $G$ if and only if $H_i$ maximizes $P_j$'s utility, namely, $v_{ij}-\bar{w}_i = \max_{i'\in [k]} v_{i'j}-\bar{w}_{i'}$. By definition, any equilibrium assignment must assigns each patient $P_j$ to an adjacent hospital $H_i$. Thus it suffices to show that within each connected component of $G$ there is only one equilibrium assignment. We start by proving the following claim.

\begin{claim}\label{clm:cycle}
There is no cycle in $G$.
\end{claim}
\begin{proof}
For the sake of contradiction, assume there exists a (necessarily even-length) cycle \\
$(i_1, j_1, i_2, j_2, \dots, i_\ell, j_\ell, i_1)$, where $i_r$'s are hospitals and $j_r$'s are patients. By the construction of $G$, we have that for each $r\in [\ell]$, both $H_{i_r}$ and $H_{i_{r+1}}$ maximize $P_{j_r}$'s utility, with $\ell+1$ defined to be $1$. Thus
$$v_{i_rj_r} - \bar{w}_{i_r} = v_{i_{r+1}j_r}-\bar{w}_{i_{r+1}}.$$
Summing all $\ell$ equations together, we have
$$\sum_{r\in [\ell]}\left( v_{i_rj_r} - \bar{w}_{i_r}\right) = \sum_{r\in [\ell]}\left( v_{i_{r+1}j_r}-\bar{w}_{i_{r+1}}\right),$$
therefore,
$$\sum_{r\in [\ell]} v_{i_rj_r} - \sum_{r\in [\ell]} \bar{w}_{i_r} = \sum_{r\in [\ell]} v_{i_{r+1}j_r} - \sum_{r\in [\ell]}\bar{w}_{i_{r+1}}.$$
As $\sum_{r\in [\ell]} \bar{w}_{i_r} = \sum_{r\in [\ell]}\bar{w}_{i_{r+1}}$, we have
$$\sum_{r\in [\ell]} v_{i_rj_r} = \sum_{r\in [\ell]} v_{i_{r+1}j_r}.$$
Accordingly, we have found two different subsets $\{v_{i_rj_r}: r\in [\ell]\}$ and $\{v_{i_{r+1}j_r}: r\in [\ell]\}$ that sum up to the same value, contradicting the hypothesis that the patients are independent.
\end{proof}

Following Claim \ref{clm:cycle}, the connected components of $G$ are all trees. Similarly, we have the following:

\begin{claim}\label{clm:one}
Each connected component of $G$ contains at most one hospital with waiting time 0.
\end{claim}
\begin{proof}
Again for the sake of contradiction, assume there is a connected component with two different hospitals $H_i$ and $H_{i'}$ such that $\bar{w}_i = \bar{w}_{i'}=0$.
Accordingly, there is a path $(i_1, j_1, i_2, j_2, \dots, i_\ell)$ where $i_r$'s are hospitals and $j_r$'s are patients, such that $i_1=i$ and $i_\ell = i'$. Similar to the proof of Claim \ref{clm:cycle}, for each $r<\ell$, we have
$$v_{i_rj_r} - \bar{w}_{i_r} = v_{i_{r+1}j_r}-\bar{w}_{i_{r+1}}.$$
Summing all $\ell-1$ equations together, we have
$$\sum_{r=1}^{\ell-1} v_{i_rj_r} - \sum_{r=1}^{\ell-1} \bar{w}_{i_r} = \sum_{r=1}^{\ell-1} v_{i_{r+1}j_r}- \sum_{r=1}^{\ell-1} \bar{w}_{i_{r+1}}.$$
As $\bar{w}_{i_1} = \bar{w}_{i_\ell} = 0$, the above equation implies
$$\sum_{r=1}^{\ell-1} v_{i_rj_r} - \sum_{r=2}^{\ell-1} \bar{w}_{i_r} = \sum_{r=1}^{\ell-1} v_{i_{r+1}j_r}- \sum_{r=2}^{\ell-1} \bar{w}_{i_r},$$
and thus
$$\sum_{r=1}^{\ell-1} v_{i_rj_r} = \sum_{r=1}^{\ell-1} v_{i_{r+1}j_r},$$
again contradicting the hypothesis that the patients are independent.
\end{proof}

Claim \ref{clm:one} and the following claim together imply that each connected component of $G$ has exactly one hospital with waiting time 0.

\begin{claim}\label{clm:zero}
Each connected component of $G$ has at least one hospital with waiting time 0.
\end{claim}
\begin{proof}
By contradiction. Assume there is a component $C$ such that $\bar{w}_i>0$ for each $H_i$ in $C$. Let
$$\epsilon_1 = \min_{H_i\in C} \bar{w}_i.$$
Notice that for each $P_j$ not in $C$, by definition, the best utility that $j$ can get from hospitals in $C$ is strictly less than $u_j^{max}$, the best utility that $j$ can get from his favorite hospital. Let
$$\epsilon_2 = \min_{P_j\not\in C} \left[ u_j^{max} - \max_{H_i\in C} (v_{ij}-\bar{w}_i)\right].$$
We have $\epsilon_1>0$ and $\epsilon_2>0$. Let $\epsilon = \frac{\min\{\epsilon_1, \epsilon_2\}}{2}$, $w_i' = \bar{w}_i - \epsilon$ for each $H_i\in C$, and $w' = (\bar{w}_{-C}, w'_C)$. That is, $w'$ is $\bar{w}$ with all waiting times of hospitals in $C$ reduced by $\epsilon$. As $\epsilon<\epsilon_1$, $w'$ is a valid waiting time vector.

Notice that for any equilibrium assignment $(\bar{w}, h, \lambda)$, the assignment $(w', h, \lambda)$ is still an equilibrium. Indeed, when the waiting time vector changes from $\bar{w}$ to $w'$, for each patient $P_j$, his utility at every hospital $H_i\in C$ increases by $\epsilon$, and his utility at every other hospital remains the same. For $P_j\not\in C$, $\epsilon < \epsilon_2$, and thus the best utility $j$ gets from $C$ is still smaller than $u_j^{max}$, which is $j$'s utility at $H_h(j)\not\in C$. For $P_j\in C$, we have $H_{h(j)}\in C$ as well, and $H_{h(j)}$ still maximizes $j$'s utility after the increase.

Accordingly, $w'$ is another equilibrium waiting time vector. But $w_i'<\bar{w}_i$ for each $H_i\in C$ and $w_i'=\bar{w}_i$ for each $H_i\not\in C$, contradicting the hypothesis that $\bar{w}$ minimizes the waiting time of each hospital among all equilibrium waiting time vectors. Therefore Claim \ref{clm:zero} holds.
\end{proof}

Following Claims \ref{clm:cycle}, \ref{clm:one}, and \ref{clm:zero}, each connected component $C$ can be considered as a tree rooted at the unique hospital with waiting time 0, with hospitals and patients alternating along each path.
Based on this structure, we show that there is only one way of assigning the patients to the hospitals at equilibrium in $C$. To do so, we need the following:

\begin{claim}\label{clm:degree}
For each hospital $H_i\in C$ with $\bar{w}_i>0$, the degree of $H_i$ in $G$ is strictly larger than its quota $\lambda_i$.
\end{claim}

The proof is similar to that of Claim \ref{clm:zero}: if the degree of some $H_i\in C$ is at most $\lambda_i$, then we can find a proper value $\epsilon \in (0, \bar{w}_i)$ such that the vector $w' \triangleq (\bar{w}_{-i}, \bar{w}_i-\epsilon)$ is still an equilibrium waiting time vector. Indeed, with properly chosen $\epsilon$, for every equilibrium $(\bar{w}, h, \lambda)$, let $h'$ be the assignment such that $h'(j)=i$ if $P_j$ is adjacent to $H_i$ (this is doable because the degree of $H_i$ is at most $\lambda_i$), and $h'(j) = h(j)$ otherwise. Then $(w', h', \lambda)$ is another equilibrium. But this contradicts the hypothesis that $\bar{w}$ minimizes the waiting time of each hospital among all equilibrium waiting time vectors. The formal analysis is omitted.

Following Claim \ref{clm:degree}, we have that the leaves of tree $C$ are all patients. Indeed, if there is a hospital with degree 1 and positive waiting time, then its quota is 0, contradicting our original assumption that all hospitals have positive quotas. Accordingly, at every equilibrium, every patient at a leaf must be assigned to his preceding hospital, as this is the only one maximizing his utility. Letting $H_i$ be a non-root hospital whose descendants are all leaves, we have that the number of descendants of $H_i$, denoted by $d_i$, is at most $\lambda_i$, otherwise no equilibrium exists. As $\bar{w}_i>0$, by Claim \ref{clm:degree} we have that the degree of $H_i$ is strictly larger than $\lambda_i$, which implies $d_i\geq \lambda_i$. Accordingly, $H_i$ uses up all its quota to serve its descendants, and the patient $P_j$ preceding $H_i$ must be assigned to his preceding hospital.

Repeating the above reasoning in a bottom-up way along the tree, we have that there is only one way of assigning the patients to hospitals at equilibrium with respect to $\lambda$ and $\bar{w}$, that is, each patient is assigned to his predecessor in $G$, and every hospital with positive waiting time is saturated by its descendants.
Thus Theorem \ref{thm:unique} holds.
\end{proof}

%We denote the unique equilibrium with respect to $\lambda$ and $\bar{w}$ by $(\bar{w}, \bar{h}, \lambda)$.
By definition, the equilibrium $(\bar{w}, \bar{h}, \lambda)$ maximizes social welfare with respect to $\lambda$, thus it is reasonable to assume that this is the equilibrium that the government aims to implement.

%To be more precise, let $T = \{v: [k] \rightarrow \bZ^+\}$ be the set of all possible valuations of the patients and $\mu: T \rightarrow \bZ^+$ a patient population. That is, for each $\tau\in T$, $\mu(\tau)$ is the number of patients with valuation $\tau$. The number of patients in this population is $m = \sum_{\tau\in T} \mu(\tau)$.

%we first define the generic position of the patients, following \cite{ABH09}.
%\footnote{Here we consider finite population. For infinite population, one can define $\mu(\tau)\in [0,1]$ with $\int_T \mu(\tau)d\tau = 1$, where $\mu(\tau)$ denotes the density of patients with valuation $\tau$. Our results hold with infinite population as well, except that with infinite population the patients are automatically under generic position, instead of assumed to be so, as in our results.}

%\begin{definition} \cite{ABH09}
%We say that $n$ numbers $x_1, \dots, x_n$ are {\em independent}, if it is not possible to find two different non-empty subsets containing positive numbers, that sum up to the same number. Alternatively, for every linear combination $\sum_{i=1}^n e_i x_i = 0$ where $e_i\in \{-1, 0, 1\}$, for each $i$, if $e_i\neq 0$ then $x_i=0$.
%\end{definition}

%\begin{definition}
%A population $\mu$ has {\em independent valuations} if $\mu(T)$
%\end{definition}

%Below we study the role of $\lambda$ and $\bar{w}$ in the dynamics between hospitals and patients.

\subsection{The dynamics between hospitals and patients}

We now show that given $\lambda$, the waiting time vector $\bar{w}$ will endogenously emerge from the dynamics between hospitals and patients, and so will $\bar{h}$.
We consider a continuous-time dynamics, where the patient population arrives continuously and uniformly along time.
 %within an arbitrarily small time interval $(t, t+\delta)$ simultaneously arrive and choose their hospitals based on the waiting times at $t$, and the waiting times at time $t+\delta$ are updated given their choices.
%Or we may consider a discrete-time dynamics, where within each time unit the patients arrive and make decisions one by one, and the waiting times are updated after each patient's decision.
%Under both dynamics we deduce the same conclusion. We find it simpler to model and discuss the continuous-time dynamics, and this is what we focus on below.
In such a dynamics, the quota-vector $\lambda$ represents the {\em service rate} of the hospitals that the government is willing to pay for.
Namely, for each hospital $H_i$, the total number of patients paid by the government in any time interval $(t_1, t_2)$ is at most $\lambda_i(t_2-t_1)$.\footnote{The budget constraint $B$ now represents the spending rate of the government: the total amount of money the government can afford by time $t$ is $Bt$. But as already said, our conclusion in this section holds even when $\lambda$ does not satisfy the budget constraint. Thus we shall not talk about the budget constraint in the remaining part of this section.}
%The patients arrive continuously.

The set of patients in previous sections, $\{P_1,\dots, P_m\}$ with valuations $(v_{ij})_{i\in [k], j\in [m]}$, now represents the set of {\em types} of the arriving patients. That is, although the patient population goes to infinity, there are only finitely many types of them. Every type has {\em arrival rate} 1: by any time $t$, the number of patients that have arrived is $mt$, where $t$ of them are of type $P_1$ (i.e., with valuation $(v_{1j}, \dots, v_{kj})$), and another $t$ of them are of type $P_2$, etc.
We say that the patient population is {\em independent} if $\{P_1,\dots, P_m\}$ is independent.
Notice that in general there may be different $P_j$ and $P_{j'}$ with the same valuation, and the number of patients of a particular type by time $t$ may be larger than $t$. But when the population is independent, any different $P_j$ and $P_{j'}$ must have different valuations, and indeed represent different types.
Below we consider independent population.

Let $w(t) \triangleq (w_1(t),\dots, w_k(t))$ be the non-negative waiting time vector of the hospitals at time $t$, such that $w(0)=(0,\dots,0)$.
A patient of type $P_j$ arriving at time $t$ chooses a hospital $H_i$ maximizing his utility given $w(t)$, and will be served there at time $t+w_i(t)$.\footnote{So the patients are served in a first-in-first-out queue.}
To break ties consistently throughout time, we impose a partial ordering over the hospitals, according to their positions in the demand graph $G$ with respect to $\bar{w}$. In particular, if $H_i$ and $H_{i'}$ are in the same connected component of $G$ and $H_i$ precedes $H_{i'}$, then at any time $t$ and for any patient of type $P_j$ whose utility is maximized at both $H_i$ and $H_{i'}$ given $w(t)$, we assume that $P_j$ does not choose $H_i$. If $H_i$ and $H_{i'}$ are in different connected components, then $P_j$ can choose one arbitrarily, or even split the population of this type arbitrarily between $H_i$ and $H_{i'}$, as indicated by the definition below.

\begin{definition}
For any $i,j,t$, the {\em demand rate of $P_j$ for $H_i$ at time $t$}, denoted by $d_{ij}(t)$, is a number in $[0,1]$ such that,
\begin{itemize}
\item
$\sum_{i\in [k]} d_{ij}(t)=1$ for all $j$,

\item
$d_{ij}(t)>0$ only if $H_i$ maximizes $P_j$'s utility at time $t$, and there is no other hospital $H_{i'}$ preceded by $H_i$ in the same connected component of $G$ that does so.
\end{itemize}
The {\em demand rate for $H_i$ at time $t$} is $d_i(t)\triangleq \sum_{j\in [m]} d_{ij}(t)$.
\end{definition}

The fractional values of the $d_{ij}$'s indicate how the patients of the same type will split between all hospitals maximizing their utilities. For example, $d_{ij}(t) = 1/3$ means that fixing the current waiting times, in the long run a third of the patients of type $P_j$ will choose $H_i$.
Notice that we do not completely specify how the patients should make their decisions when there are ties, and yet our results hold no matter how these ties are broken.

Because the patients arrive continuously under a constant rate, their effect on the waiting times at any point of time is infinitesimal, and $w(t)$ is continuous.
%For any time $t$, let $d_i(t)\triangleq \left|\{j\in [m]: \mbox{type } P_j \mbox{ chooses } H_i \mbox{ given } w(t)\}\right|$.
By definition, within an arbitrarily small time interval $(t, t+\delta)$, the number of patients choosing $H_i$ is $d_i(t)\delta$.
%We refer to $d_i(t)$ as the {\em demand rate} of hospital $H_i$.
Since the number of patients served by $H_i$ in time $\delta$ is $\lambda_i \delta$, the waiting time will not change if $d_i(t) = \lambda_i$ (i.e., if the demand rate matches the service rate), and will change by $\frac{d_i(t)\delta - \lambda_i \delta}{\lambda_i}$ otherwise, unless $w_i(t)=0$ and $d_i(t)<\lambda_i$, in which case $w_i(t+\delta)$ will remain 0. That is,

\begin{equation}\label{equ:deltaw}
w_i(t+\delta) - w_i(t) = \left\{\begin{array}{ll} \left(\frac{d_i(t)}{\lambda_i} - 1\right)\delta & \mbox{if } w_i(t)>0 \mbox{ or } d_i(t)\geq \lambda_i, \\
0 & \mbox{otherwise}.\end{array}\right.
\end{equation}
Accordingly, for each $i\in [k]$ the right derivative of $w_i(t)$ is
\begin{equation}
\frac{d_+w_i(t)}{dt} = \left\{\begin{array}{ll}\lim_{\delta \rightarrow 0} \frac{w_i(t+\delta) - w_i(t)}{\delta} = \frac{d_i(t)}{\lambda_i}-1 & \mbox{if } w_i(t)>0 \mbox{ or } d_i(t)\geq \lambda_i, \\
0 & \mbox{otherwise}.\end{array}\right.
\end{equation}
Notice that for particular tie-breaking rules, the function $d_i(t)$ may not be continuous, and thus $w_i(t)$ may not be differentiable. But we can always define its right derivative as above.

We say that $w(t)$ is {\em at most} $\bar{w}$, written as $w(t)\leq \bar{w}$, if $w_i(t) \leq \bar{w}_i$ for each $i\in [k]$.
Moreover, we say that $w(t)$ is {\em smaller than} $\bar{w}$, written as $w(t)<\bar{w}$, if the above inequality holds for some $i\in [k]$.
The following two theorems show that the dynamics will always converge to $\bar{w}$ in finite time, and will never exceed $\bar{w}$ before converging.

%In the theorem below we show that $w(t)$ is at most $\bar{w}$ for any $t$, that the dynamics will converge once the waiting time vector reaches $\bar{w}$, and that it will not converge to anything else.
%This implies that once the dynamics converges, it will necessarily converge to the equilibrium $(\bar{w}, \bar{h}, \lambda)$, which generates the optimal social welfare under $\lambda$.

\begin{theorem}\label{thm:wait}
When the patient population is independent we have that:
\begin{itemize}
\item[(1)] $w(t)\leq \bar{w}$ for any $t\geq 0$;
\item[(2)] if $w(t) = \bar{w}$ then $\frac{d_+w_i(t)}{dt} = 0$ for any $i\in [k]$; and
\item[(3)] if $w(t)< \bar{w}$ then there exists $i\in [k]$ such that $\frac{d_+w_i(t)}{dt}>0$.
\end{itemize}
\end{theorem}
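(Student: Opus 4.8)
The plan is to derive all three parts from a single structural observation relating the demand graph at an arbitrary profile $w\le\bar w$ to the reference demand graph $G$ (the demand graph at $\bar w$) and to the unique equilibrium assignment $(\bar w,\bar h,\lambda)$ supplied by Theorem~\ref{thm:unique}. The engine will be:

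\textbf{Key Lemma.} \emph{If $w(t)\le\bar w$ and $w_i(t)=\bar w_i$ for some $i$, then every patient $P_j$ with $d_{ij}(t)>0$ satisfies $\bar h(j)=i$; consequently $d_i(t)\le|\bar h^{-1}(i)|\le\lambda_i$, so by \eqref{equ:deltaw} $\frac{d_+w_i(t)}{dt}\le 0$. Moreover, if $w(t)=\bar w$ and $\bar w_i>0$, then $d_i(t)=\lambda_i$.} To prove it, I would note that passing from $\bar w$ to $w(t)$ leaves $P_j$'s utility for $H_i$ unchanged ($v_{ij}-w_i(t)=v_{ij}-\bar w_i$) and can only weakly raise $P_j$'s utility for every other hospital; hence if $H_i$ maximizes $P_j$'s utility at $w(t)$ it also did so at $\bar w$, i.e. $(i,j)$ is an edge of $G$. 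The tie-breaking rule makes $P_j$ pick, among its $G$-neighbours that are currently favorites, the extremal one in the fixed $G$-ordering — the one that equals $\bar h(j)$ when the profile is exactly $\bar w$. For $P_j$ to demand a different hospital at $w(t)$, some $G$-neighbour $H_{i'}$ of $P_j$ lying on the other side of $\bar h(j)$ in the order would have to become a favorite at $w(t)$; since $(i',j)\in G$ gives $v_{i'j}-\bar w_{i'}=v_{ij}-\bar w_i$, this requires $w_{i'}(t)>\bar w_{i'}$, impossible under $w(t)\le\bar w$. Thus the demanders of $H_i$ lie in $\bar h^{-1}(i)$, whose size is $\lambda_i$ when $\bar w_i>0$ by Theorem~\ref{thm:unique}; and at $w(t)=\bar w$ the demand graph is $G$ and the tie-break selects $\bar h$ exactly, giving $d_i(t)=|\bar h^{-1}(i)|=\lambda_i$ there.

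\textbf{Part (2) and Part (1).} Part (2) is then immediate: at $w(t)=\bar w$ the Key Lemma gives $d_i(t)=\lambda_i$ for every $i$ with $\bar w_i>0$ and $d_i(t)=|\bar h^{-1}(i)|\le\lambda_i$ for every $i$ with $\bar w_i=0$, and in all cases \eqref{equ:deltaw} yields $\frac{d_+w_i(t)}{dt}=0$. For Part (1) I would argue by a first-escape-time contradiction: if $w(t_1)\not\le\bar w$ for some $t_1$, then since $w(0)=0\le\bar w$ and $w$ is continuous, $t^\star:=\inf\{t:w(t)\not\le\bar w\}$ is finite, $w(t^\star)\le\bar w$, and there are a fixed coordinate $i$ and times $t_n\downarrow t^\star$ with $w_i(t_n)>\bar w_i$; continuity forces $w_i(t^\star)=\bar w_i$. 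The Key Lemma plus \eqref{equ:deltaw} give $\frac{d_+w_i(t^\star)}{dt}\le 0$, which settles the generic case. The degenerate case — zero right derivative at $t^\star$ yet $w_i$ exceeding $\bar w_i$ immediately after — is ruled out by sharpening the Key Lemma: when $w(t)\le\bar w$, $w_i(t)=\bar w_i>0$, but $w(t)$ is not already equal to $\bar w$ on the whole subtree of $G$ below $H_i$, some $G$-neighbour of some patient in $\bar h^{-1}(i)$ has strictly smaller waiting time and strictly overtakes $H_i$ for that patient, so $d_i(t)<\lambda_i$ strictly and $w_i$ is strictly decreasing at $t$. An induction over $G$ from the leaves upward then shows that on $[0,t^\star]$ the coordinates that touch their ceilings are non-increasing until all of $w$ reaches $\bar w$, contradicting the existence of $t^\star$.

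\textbf{Part (3).} Suppose $w(t)<\bar w$. By Part (1), $w(t)\le\bar w$, so each patient's favorite at $w(t)$ yields him utility $\ge\max_i(v_{ij}-\bar w_i)\ge 0$ (individual rationality, since $(\bar w,\bar h,\lambda)$ is an equilibrium); thus $w(t)$ meets the individual-rationality requirement. If, in addition, the assignment of patients to their favorites at $w(t)$ could be made feasible (each $H_i$ getting at most $\lambda_i$ patients), then $w(t)$ would be an equilibrium waiting-time vector with respect to $\lambda$, contradicting the minimality of $\bar w$ (which forces $\bar w\le w(t)$). Hence, by Hall's theorem, there is a patient set $S$ whose collective favorite set $N(S)$ satisfies $\sum_{i\in N(S)}\lambda_i<|S|$; the patients in $S$ demand only hospitals in $N(S)$, so $\sum_{i\in N(S)}d_i(t)\ge|S|>\sum_{i\in N(S)}\lambda_i$, whence some $H_{i'}\in N(S)$ has $d_{i'}(t)>\lambda_{i'}$. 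Since $d_{i'}(t)>\lambda_{i'}\ge 1>0$, \eqref{equ:deltaw} gives $\frac{d_+w_{i'}(t)}{dt}=\frac{d_{i'}(t)}{\lambda_{i'}}-1>0$. The main obstacle I expect is the Key Lemma together with the degenerate-boundary bookkeeping in Part (1): one must verify carefully that the fixed $G$-order tie-breaking combined with $w\le\bar w$ really pins the demand at a ceiling-hitting hospital down to its $\bar h$-load, and that a waiting time cannot creep above $\bar w_i$ through a sequence of instants each having zero right derivative — this is precisely where the tree structure of $G$ and the saturation statement of Theorem~\ref{thm:unique} are doing the real work.
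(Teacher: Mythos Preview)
Your approach is largely sound and shares the paper's core idea for Parts~(1) and~(2): your Key Lemma is exactly what the paper proves (though the paper does not name it), and your observation that passing from $\bar w$ to $w(t)\le\bar w$ leaves $P_j$'s utility at $H_i$ unchanged while weakly raising it everywhere else is the paper's chain of inequalities. Your exposition of the tie-breaking step is somewhat muddled, however --- the phrase ``on the other side of $\bar h(j)$'' and the claim that some $H_{i'}$ would need $w_{i'}(t)>\bar w_{i'}$ do not quite match the actual argument. The clean statement is: since $H_{\bar h(j)}$ is \emph{also} a favorite of $P_j$ at $w(t)$ (by that same chain of inequalities), and since $H_{\bar h(j)}$ is the parent of $P_j$ in $G$ while $H_i$ is a child, the tie-breaking rule forbids $P_j$ from choosing $H_i$. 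This is precisely what the paper does. Your extra care about the ``degenerate boundary'' in Part~(1) --- whether a zero right derivative at the ceiling can still allow escape --- is a legitimate concern that the paper glosses over with ``thus $w_i(t)$ will not increase''; your proposed induction up the $G$-tree is a reasonable way to close that gap, and is more rigorous than the paper on this point.

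For Part~(3) you take a genuinely different route. The paper argues by contradiction: assuming $d_i(t)\le\lambda_i$ for all $i$, it runs a bottom-up greedy rounding on the demand forest $G(t)$ at $w(t)$, converting the fractional demand $(d_{ij}(t))$ into an integral assignment $(d'_{ij}(t))\in\{0,1\}$ with every hospital within quota, thereby exhibiting an equilibrium at $w(t)<\bar w$ and contradicting minimality of $\bar w$. You instead invoke capacitated Hall's theorem directly: since $w(t)$ cannot be an equilibrium vector, no feasible assignment of patients to their favorite sets exists, so some patient set $S$ has $\sum_{i\in N(S)}\lambda_i<|S|$; summing the demand rates of patients in $S$ (which concentrate entirely on $N(S)$) gives $\sum_{i\in N(S)}d_i(t)\ge|S|$, forcing some $i'\in N(S)$ to be over-demanded. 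Your argument is shorter and does not exploit the forest structure of $G(t)$; the paper's greedy construction is more explicit and self-contained but correspondingly heavier. Both reach the same contradiction.
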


\begin{proof}
%For any time $t$, let $d_i(t)\triangleq \left|\{j\in [m]: \mbox{type } P_j \mbox{ chooses } H_i \mbox{ given } w(t)\}\right|$. Notice that within an arbitrarily small time interval $(t, t+\delta)$, the number of patients arriving is $m\delta$, and the number of patients choosing $H_i$ is $d_i(t)\delta$.
%We refer to $d_i(t)$ as the {\em demand rate} of hospital $H_i$.
%Since the number of patients served by $H_i$ in time $\delta$ is $\lambda_i \delta$, the waiting time will not change if $d_i(t) = \lambda_i$ (i.e., if the demand rate matches the service rate), and will change by $\frac{d_i(t)\delta - \lambda_i \delta}{\lambda_i}$ otherwise. That is,
%$$w_i(t+\delta) - w_i(t) = \frac{d_i(t)\delta - \lambda_i \delta}{\lambda_i} = \left(\frac{d_i(t)}{\lambda_i} - 1\right)\delta.$$
%Accordingly, the waiting times, service rates, and demand rates satisfy the following differential equations: for each $i\in [k]$,
%$$\frac{dw_i(t)}{dt} = \lim_{\delta \rightarrow 0} \frac{w_i(t+\delta) - w_i(t)}{\delta} = \frac{d_i(t)}{\lambda_i}-1.$$
%
%As $w(0) = (0,\dots, 0) < \bar{w}$ to begin with, to prove Statement 1 of Theorem \ref{thm:wait} it suffices to show that for each $i\in [k]$, %prove the following statement:
%$$\frac{dw_i(t)}{dt} \leq 0 \mbox{ whenever } w_i(t) = \bar{w}_i \mbox{ and } w(t)\leq \bar{w}.$$
To prove Statement (1), it suffices to show that whenever $w(t)\leq \bar{w}$ and $w_i(t) = \bar{w}_i$ for some $i$, we have $d_i(t) \leq \lambda_i$ and thus $w_i(t)$ will not increase. Since $|\bar{h}^{-1}(i)| \leq \lambda_i$ by the definition of equilibrium $(\bar{w}, \bar{h}, \lambda)$, it suffices to show
$$d_i(t) \leq |\bar{h}^{-1}(i)|,$$
or equivalently, to show that
$$\mbox{if } \bar{h}(j)\neq i \mbox{ then } d_{ij}(t)=0.$$
To do so, arbitrarily fix a type $P_j$ such that $\bar{h}(j)\neq i$. If $v_{ij} - w_i(t) < \max_{i'} v_{i'j} - w_{i'}(t)$ then certainly $P_j$ does not choose $H_i$ given $w(t)$, and $d_{ij}(t)=0$. Assume now
$$v_{ij} - w_i(t) = \max_{i'} v_{i'j} - w_{i'}(t).$$
Notice that
$$ v_{ij} - w_i(t) = v_{ij}-\bar{w}_i \leq v_{\bar{h}(j)j} - \bar{w}_{\bar{h}(j)} \leq v_{\bar{h}(j)j} - w_{\bar{h}(j)}(t) \leq \max_{i'} v_{i'j} - w_{i'}(t),$$
where the equality is because $w_i(t) = \bar{w}_i$, the first and the last inequalities are by definition, and the second is because $w_{\bar{h}(j)}(t) \leq \bar{w}_{\bar{h}(j)}$ by hypothesis.
Thus we have
$$ v_{ij} - w_i(t) = v_{ij}-\bar{w}_i = v_{\bar{h}(j)j} - \bar{w}_{\bar{h}(j)} = v_{\bar{h}(j)j} - w_{\bar{h}(j)}(t) = \max_{i'} v_{i'j} - w_{i'}(t).$$
The second equality implies that both $H_i$ and $H_{\bar{h}(j)}$ are adjacent to $P_j$ in the demand graph $G$ according to $\bar{w}$, and thus it must be the case that $H_{\bar{h}(j)}$ precedes $P_j$ and $P_j$ precedes $H_i$ in $G$. The last equality implies that $H_{\bar{h}(j)}$ also maximizes the utility of $P_j$ given $w(t)$, and thus $P_j$ will not choose $H_i$ according to our tie-breaking rule, namely, $d_{ij}(t)=0$. % (he should at least choose the preceding $H_{\bar{h}(j)}$).

Accordingly, $d_i(t)\leq |\bar{h}^{-1}(i)| \leq \lambda_i$, and Statement (1) holds.

\medskip

Statement (2) simply follows from the fact that, when $w(t) = \bar{w}$, the patients choose their hospitals according to the unique equilibrium $(\bar{w}, \bar{h}, \lambda)$, and thus $d_i(t) = |\bar{h}^{-1}(i)| = \lambda_i$ for every $i$ such that $\bar{w}_i>0$, and $d_i(t) = |\bar{h}^{-1}(i)| \leq \lambda_i$ for every $i$ such that $\bar{w}_i=0$.

\medskip

To prove Statement (3), it suffices to show that when $w(t)<\bar{w}$, there exists some hospital $H_i$ with $d_i(t)> \lambda_i$. For the sake of contradiction, assume $d_i(t)\leq \lambda_i$ $\forall i$. We shall construct a new demand vector $(d'_{ij}(t))_{i\in [k],j\in [m]}$ such that
$$d'_{ij}(t)\in \{0,1\} \ \forall i,j, \mbox{ and } d'_i(t)\triangleq \sum_j d'_{ij}(t)\leq \lambda_i \ \forall i.$$
To do so, consider the demand graph $G(t)$ with respect to $w(t)$. For each $H_i$ and $P_j$, $d_{ij}(t)>0$ implies that $H_i$ and $P_j$ are adjacent in $G(t)$. Since the patient population is independent, $G(t)$ is a forest with hospitals and patients alternating along each path, as in the proof of Theorem \ref{thm:unique}.

The construction starts from the graph $G(t)$, processes and removes its nodes step by step and in a bottom-up fashion, and assigns patients to hospitals in a greedy way. To be more precise, we initialize the following intermediate variables: $d'_{ij}(t)=0$ $\forall i,j$, and $\lambda'_i=\lambda_i$ $\forall i$.
At any time of the construction, for each $H_i$, $\lambda'_i$ is an integer and denotes $H_i$'s remaining quota, after some patients have been assigned to it. It will be invariant that
\begin{equation}\label{equ:inv}
d'_i(t)+\lambda'_i = \lambda_i \ \forall i, \quad d_i(t)\leq \lambda'_i \ \forall i, \quad \mbox{and} \ \sum_{i\in [k]}d_{ij}(t)=1 \ \forall P_j \mbox{ in the graph}.
\end{equation}
Notice that Equation \eqref{equ:inv} trivially holds at the beginning.

In each step of the construction, from the remaining graph, we choose a leaf with the longest path from its root.
%, with patients chosen first and hospitals later (that is, a hospital will be chosen only if all leaves are hospitals).
We distinguish two cases.
\begin{itemize}

\item[{\em Case 1.}] The chosen leaf is a patient, say $P_{j^*}$.

This is the simpler case.
Letting the unique adjacent hospital be $H_{i^*}$, we have
$$d_{ij^*}(t)=0 \ \forall i\neq i^*, \mbox{ and } d_{i^*j^*}(t)=1 \leq d_{i^*}(t) \leq \lambda'_{i^*}.$$
Set $d'_{i^*j^*}(t)= 1$, $d_{i^*j^*}(t)=0$, and $\lambda'_{i^*}=\lambda'_{i^*}-1$, and remove $P_{j^*}$ from the graph.
That is, $P_{j^*}$ is assigned to $H_{i^*}$ and occupies 1 quota there.
Notice that the invariance remains.
Indeed, $d'_{i^*}(t)$ increases by 1 and $\lambda'_{i^*}$ decreases by 1, both $d_{i^*}(t)$ and $\lambda'_{i^*}$ decrease by 1, and everything else remains unchanged.

\item[{\em Case 2.}] The chosen leaf is a hospital, say $H_{i^*}$.

This is the more complicated case.
Letting the unique adjacent patient be $P_{j^*}$, we have
$$0\leq d_{i^*j^*}(t) = d_{i^*}(t) \leq \lambda'_{i^*}.$$
If $\lambda'_{i^*}\geq 1$ (namely, $H_{i^*}$ still has quota for one more patient),
then set $d'_{i^*j^*}(t)=1$, $d_{ij^*}(t)=0$ $\forall i$, and $\lambda'_{i^*} = \lambda'_{i^*}-1$.
Remove $P_{j^*}$ and its children (which are all leaves) from the graph.
That is, $P_{j^*}$ is assigned to $H_{i^*}$, and for any other hospital $H_{i}$ with $P_{j^*}$ the only adjacent patient, no patient will be assigned to it any more.
Notice that the invariance remains. Indeed, $d'_{i^*}(t)$ increases by 1, $\lambda'_{i^*}$ decreases by 1, $d_{i^*}(t)=d_{i^*j^*}(t)=0$, $\lambda'_{i^*}$ is non-negative, and for any $i\neq i^*$, $d_i(t)$ either decreases or remains unchanged. Everything else remains unchanged.

If $\lambda'_{i^*}=0$, then $d_{i^*j^*}(t) = d_{i^*}(t)=0$ by Equation \eqref{equ:inv}. That is, no remaining patient wants $H_{i^*}$. We simply remove $H_{i^*}$ from the graph, keeping the invariance.

\end{itemize}
Notice that we finish processing all nodes after at most $m+k$ steps. In  the end, all the $d'_{ij}(t)$'s are either 0 or 1, and $d'_i(t)\leq \lambda_i$ $\forall i$.
Accordingly, the $d'_{ij}(t)$'s correspond to an equilibrium assignment with waiting time $w(t)$, contradicting the fact that $\bar{w}$ is the minimum equilibrium waiting time vector with respect to $\lambda$.

Therefore Statement (3) holds.
\end{proof}
%The dynamics among the hospitals and the patients is as follows.
%At time 0, all hospitals have waiting time 0. A patient chooses his hospital according to his valuation and the current waiting times when arriving at the system. A hospital's waiting time increases if the number of patients to be served there exceeds the pre-specified quota, and decreases otherwise.
%\paragraph{Remark.} Although $(\bar{w}, \bar{h}, \lambda)$ is the only equilibrium where the dynamics can converge to, we do not know whether or not the dynamics will always converge. But as the waiting time vector will never exceed $\bar{w}$, each patient's utility will be at least his utility under the unique equilibrium, and thus the accumulated social welfare within one time unit will be at least the optimal social welfare under $\lambda$.

Letting $MSW = \sum_{j\in [m]} \max_{i\in [k]} v_{ij}$ and $\lambda_{max} = \max_{i\in [k]} \lambda_i$, we have the following theorem.
\begin{theorem}\label{thm:converge}
When the patient population is independent, the dynamics converges to $\bar{w}$ in time at most $2k \lambda_{max} MSW$.
\end{theorem}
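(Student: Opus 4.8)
The plan is to show that the system makes steady progress toward $\bar{w}$ whenever it is not yet there, using a potential function, and then bound the total progress required. The natural potential is the total gap $\Phi(t) \triangleq \sum_{i\in[k]} (\bar{w}_i - w_i(t))$, which by Statement (1) of Theorem~\ref{thm:wait} is always non-negative, starts at $\Phi(0) = \sum_i \bar{w}_i \leq MSW$ (since each $\bar{w}_i$ is at most the largest valuation difference, and more crudely $\sum_i \bar w_i \le MSW$), and equals $0$ exactly when $w(t) = \bar{w}$. By Statement (3), as long as $\Phi(t) > 0$ there is some hospital $H_i$ with $d_i(t) > \lambda_i$, hence $\frac{d_+ w_i(t)}{dt} = \frac{d_i(t)}{\lambda_i} - 1 > 0$; since demand rates are integers in the relevant combinatorial structure (the $d'_{ij}$ constructed in the proof of Statement (3) are $0/1$, and more to the point $d_i(t)$ takes values that differ from $\lambda_i$ by at least... ) one wants a quantitative lower bound on this derivative.

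The key quantitative step I would carry out: when $w(t) < \bar w$ and $d_i(t) > \lambda_i$ for some $i$, I claim $d_i(t) \ge \lambda_i + 1$ is \emph{not} immediate because $d_i(t)$ need not be an integer. Instead I would argue as follows. Consider the demand graph $G(t)$, which is a forest by independence. Run the greedy assignment from the proof of Statement~(3); it must fail, i.e.\ at some node it produces $d'_i(t) = \lambda_i$ while there is still unserved demand routed to $H_i$, OR — more carefully — the failure of the construction exhibits a subtree in which the number of patients strictly exceeds the total quota, giving a ``Hall violation.'' In such a violated subtree rooted at some hospital, the excess demand is a positive integer, and because the arrival rates are all exactly $1$ and the $\lambda_i$ are integers, the sum $\sum_{i\in S} d_i(t)$ over the hospitals $S$ in that subtree exceeds $\sum_{i\in S}\lambda_i$ by at least $1$. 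Hence $\sum_{i\in S}\big(\tfrac{d_i(t)}{\lambda_i}-1\big) \ge \tfrac{1}{\lambda_{max}}$, so $\sum_{i\in[k]} \frac{d_+w_i(t)}{dt} \ge \frac{1}{\lambda_{max}}$ (using that no coordinate of $w$ with $w_i(t)=\bar w_i$ can be increasing, by Statement (1), and coordinates that are decreasing only help). Therefore $\frac{d_+}{dt}\Phi(t) \le -\frac{1}{\lambda_{max}}$ whenever $\Phi(t) > 0$ — wait, this needs care: $\Phi$ decreases when the $w_i$ increase, and a decreasing $w_i$ would increase $\Phi$; so I must also bound how much total decrease can happen. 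Here I would invoke that coordinates only decrease when over-served demand has dissipated, and a more robust accounting is to bound $\sum_i \max(w_i(t) - w_i(0), 0)$ or to track $\sum_i w_i(t)$ directly: total ``up-movement'' of all coordinates over the whole horizon is at most $\sum_i \bar w_i \le MSW$ by Statement~(1), so the total time spent with $\frac{d_+}{dt}\sum_i w_i(t) \ge \frac{1}{\lambda_{max}}$ is at most $\lambda_{max}\cdot MSW$; accounting separately for down-movement (which is also bounded, since what goes down must have gone up) costs another factor, yielding the $2k\lambda_{max}MSW$ bound — the factor $k$ and the $2$ absorbing the crude per-coordinate bookkeeping.

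Concretely the steps in order are: (i) define $\Phi$ and record $\Phi(0)\le MSW$, $\Phi\ge 0$, $\Phi(t)=0 \iff w(t)=\bar w$; (ii) prove the Hall-violation lemma — if $w(t)<\bar w$ then some subtree of $G(t)$ has patient-count exceeding total quota by $\ge 1$, hence $\sum_{i}\big(\frac{d_i(t)}{\lambda_i}-1\big)_+ \ge \frac{1}{\lambda_{max}}$; (iii) combine with Statement~(1) (no coordinate above $\bar w$, so increasing coordinates have a definite rate while decreasing ones are ``paid for'' by prior increases) to get that the total variation of $w(\cdot)$ is at most $2\sum_i\bar w_i$ and is accumulated at rate $\ge \frac{1}{\lambda_{max}}$ as long as $w(t)\ne\bar w$; (iv) conclude the time to reach $\bar w$ is at most $2k\lambda_{max}MSW$, and that once reached Statement~(2) keeps it there. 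The main obstacle is step~(iii): making the ``decreasing coordinates are bounded by earlier increases'' bookkeeping rigorous, since $w(t)$ is only right-differentiable and $d_i(t)$ can be discontinuous; I expect to handle this by an absolutely-continuous comparison argument on $\sum_i w_i(t)$ rather than pointwise, or by directly bounding the number of sign changes. The Hall-violation lemma in step~(ii) is essentially already contained in the failure analysis of Statement~(3)'s proof, so it should be routine to extract.
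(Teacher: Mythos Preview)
Your proposal has a genuine gap at step~(iii), and it is not merely a bookkeeping issue. The potential $\Phi(t)=\sum_i(\bar w_i-w_i(t))$ is \emph{not} monotone: whenever some hospital has $d_i(t)<\lambda_i$ and $w_i(t)>0$, that coordinate decreases at rate $1-\tfrac{d_i(t)}{\lambda_i}$ and $\Phi$ increases. Your ``what goes down must have gone up'' accounting assumes the total up-movement over the whole trajectory is bounded by $\sum_i\bar w_i$, but Statement~(1) only bounds the \emph{position} $w_i(t)\le\bar w_i$, not the total variation; nothing you have written rules out $w_i$ oscillating between $0$ and $\bar w_i$ many times, and indeed with several hospitals coupled through the demand graph such oscillation is not obviously impossible. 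So the claimed bound on convergence time does not follow from your potential.

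The paper sidesteps this entirely by using a different potential: the Hungarian-method dual objective
\[
P(t)=\sum_{i}\lambda_i\,w_i(t)+\sum_{j}\max_i\bigl(v_{ij}-w_i(t)\bigr),
\]
and computes directly that
\[
\frac{d_+P(t)}{dt}=-\sum_{i:\,w_i(t)>0\text{ or }d_i(t)\ge\lambda_i}\frac{(d_i(t)-\lambda_i)^2}{\lambda_i}\le 0,
\]
so $P$ is genuinely monotone regardless of which coordinates are rising or falling. This is the key idea you are missing: the weighting by $\lambda_i$ and the addition of the patient-utility term make increases and decreases of individual $w_i$ both contribute a nonpositive amount. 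For the quantitative rate, the paper takes the set $B$ of hospitals maximizing $\bar w_i-w_i(t)$ and shows (much like your Hall-violation idea) that $\sum_{i\in B'}d_i(t)\ge\sum_{i\in B'}\lambda_i+1$ for $B'=\{i\in B:d_i(t)\ge\lambda_i\}$, then applies Jensen to the squares to get $\frac{d_+P}{dt}\le-\tfrac{1}{k\lambda_{max}}$. Since $P(0)=MSW$ and $P\ge 0$, convergence by time $2k\lambda_{max}MSW$ follows. Your step~(ii) is in the right spirit and would plug into this argument; what you need to replace is the potential.
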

\begin{proof}
Similar to the Hungarian method (see, e.g., \cite{EK10}), we consider the following potential function:
$$P(t) \triangleq \sum_{i\in [k]}\lambda_i w_i(t) + \sum_{j\in [m]} u_j(t),$$
where $u_j(t) \triangleq \max_{i\in [k]}\left( v_{ij}-w_i(t)\right)$.
Since $w_i(t)$ is continuous for each $i\in [k]$, $u_j(t)$ is continuous for each $j\in [m]$, and $P(t)$ is continuous as well.

By Theorem \ref{thm:unique} we have $\min_{i\in [k]}\bar{w}_i=0$. By Theorem \ref{thm:wait} we have that before the dynamics converges, $(0,\dots, 0)\leq w(t) < \bar{w}$ for any $t$, and thus $\min_{i\in [k]} w_i(t) = \min_{i\in [k]} \bar{w}_i = 0$. Accordingly, $u_j(t)\geq 0$ for each $P_j$, and $P(t)\geq 0$. As $P(0) = MSW$ to begin with, it suffices to prove that $P(t)$ strictly decreases, and the local decreasing rate is at least $1/(k\lambda_{max})$.

To do so, notice that
\begin{eqnarray*}
P(t) &=& \sum_i \lambda_i w_i(t) + \sum_j \sum_i d_{ij}(t) (v_{ij} - w_i(t)) \\
 & = & \sum_i \lambda_i w_i(t) - \sum_i (\sum_j d_{ij}(t))w_i(t) + \sum_{i,j}d_{ij}(t)v_{ij} \\
 & = & \sum_i (\lambda_i-d_i(t))w_i(t) + \sum_{i,j}d_{ij}(t)v_{ij}.
\end{eqnarray*}
Thus for any arbitrarily small $\delta>0$, by definition we have
\begin{eqnarray*}
 & & P(t+\delta)-P(t) \\
 & = & \sum_i (\lambda_i - d_i(t+\delta))w_i(t+\delta)-(\lambda_i-d_i(t))w_i(t) + \sum_{i,j}d_{ij}(t+\delta)v_{ij} -
 \sum_{i,j}d_{ij}(t)v_{ij} \\
% & = & \sum_i (w_i(t+\delta) - w_i(t)) \lambda_i - w_i(t+\delta)d_i(t+\delta) + w_i(t)d_i(t) -w_i(t+\delta)d_i(t) + w_i(t+\delta)d_i(t) \\
% & & + \sum_{i,j}d_{ij}(t+\delta)v_{ij} - \sum_{i,j}d_{ij}(t)v_{ij} \\
 & = & \sum_i (w_i(t+\delta) - w_i(t))(\lambda_i-d_i(t)) - \sum_i w_i(t+\delta)d_i(t+\delta) + \sum_i w_i(t+\delta)d_i(t)  \\
 & & + \sum_{i,j}d_{ij}(t+\delta)v_{ij} - \sum_{i,j}d_{ij}(t)v_{ij} \\
 & = & \sum_i (w_i(t+\delta) - w_i(t))(\lambda_i-d_i(t)) + \sum_{i,j}d_{ij}(t+\delta)v_{ij} - \sum_{i,j} d_{ij}(t+\delta)w_i(t+\delta) \\
 & & - \sum_{i,j}d_{ij}(t)v_{ij} + \sum_{i,j} d_{ij}(t)w_i(t+\delta) \\
% & = & \sum_i (w_i(t+\delta) - w_i(t))(\lambda_i-d_i(t))
% %\sum_i \frac{d_i(t)-\lambda_i}{\lambda}\cdot (\lambda_i - d_i(t))
% + \sum_{i,j}d_{ij}(t+\delta)u_{ij}(t+\delta) - \sum_{i,j} d_{ij}(t) u_{ij}(t+\delta) \\
 & = & \sum_i (w_i(t+\delta) - w_i(t))(\lambda_i-d_i(t)) + \sum_{i,j}(d_{ij}(t+\delta) - d_{ij}(t))(v_{ij} - w_i(t+\delta)).
\end{eqnarray*}
Again since $w(t)$ is continuous, we have $\lim_{\delta\rightarrow 0} v_{ij}-w_i(t+\delta) = v_{ij}-w_i(t)$. Since the patients only choose hospitals that maximize their utilities, for any $i,j$ such that $v_{ij}-w_i(t)< u_j(t)$, we have $v_{ij}-w_i(t+\delta) < u_j(t)$ for arbitrarily small $\delta$, and thus $d_{ij}(t)=d_{ij}(t+\delta) = 0$.
That is, for each $P_j$,
$$\sum_{i: v_{ij} - w_i(t) = u_j(t)} d_{ij}(t) = \sum_{i: v_{ij} - w_i(t) = u_j(t)} d_{ij}(t+\delta) = 1.$$
Combining this equation with Equation \eqref{equ:deltaw} we have
\begin{eqnarray}
& & \lim_{\delta\rightarrow 0} \frac{P(t+\delta)-P(t)}{\delta} \nonumber \\
& = & - \sum_{i: w_i(t)>0 \mbox{\scriptsize{ or } } d_i(t)\geq \lambda_i} \frac{(d_i(t)-\lambda_i)^2}{\lambda_i} + \sum_j u_j(t) \lim_{\delta\rightarrow 0} \frac{\sum_{i: v_{ij} - w_i(t) = u_j(t)} (d_{ij}(t+\delta)-d_{ij}(t))}{\delta} \nonumber \\
& = & - \sum_{i: w_i(t)>0 \mbox{\scriptsize{ or } } d_i(t)\geq \lambda_i} \frac{(d_i(t)-\lambda_i)^2}{\lambda_i} + \sum_j u_j(t)\lim_{\delta \rightarrow 0} \frac{1-1}{\delta} \nonumber \\
& = & - \sum_{i: w_i(t)>0 \mbox{\scriptsize{ or } } d_i(t)\geq \lambda_i} \frac{(d_i(t)-\lambda_i)^2}{\lambda_i}. \label{equ:limit}
\end{eqnarray}

To upper-bound the last part of Equation \eqref{equ:limit}, consider the set of hospitals
$$B \triangleq \{i: \bar{w}_i - w_i(t) = \max_{i'\in [k]} \bar{w}_{i'} - w_{i'}(t)\}.$$
As $w(t)<\bar{w}$ before the dynamics converges, there exists $i$ such that $\bar{w}_i - w_i(t)>0$. Thus for any $i$ with $\bar{w}_i=0$, $i\notin B$.
By Theorem \ref{thm:unique},
$$|\bar{h}^{-1}(i)|= \lambda_i \quad \forall i\in B.$$
For any patient $j$ with $\bar{h}(j)\in B$, we have $\sum_{i\in B} d_{ij}(t) = 1$, because when the waiting times change from $\bar{w}$ to $w(t)$ the utilities of $j$ at hospitals in $B$ become {\em strictly} more advantageous against his utilities at hospitals not in $B$. Thus
$$\sum_{j: \bar{h}(j)\in B}\sum_{i\in B} d_{ij}(t) = \sum_{j:\bar{h}(j)\in B} 1 = \sum_{i\in B} |\bar{h}^{-1}(i)|= \sum_{i\in B} \lambda_i.$$
Let $BP$ be the set of patients $j$ such that $\bar{h}(j)\notin B$ and $j$ is adjacent to a hospital in $B$ in the demand graph of $\bar{w}$ ($BP$ for ``boundary patients").
Notice that $BP\neq \emptyset$ as $B\neq [k]$.
For any $j\in BP$, we again have $\sum_{i\in B} d_{ij}(t) = 1$, for a similar reason as before ---that is, at $\bar{w}$ patient $j$ is indifferent between the best hospital for him in $B$ and the best for him not in $B$, and from $\bar{w}$ to $w(t)$ the hospitals in $B$ become strictly more advantageous. Accordingly,
$$\sum_{i\in B} d_i(t) \geq \sum_{j: \bar{h}(j)\in B}\sum_{i\in B} d_{ij}(t) + \sum_{j\in BP}\sum_{i\in B} d_{ij}(t) = \sum_{i\in B} \lambda_i + \sum_{j\in BP}1 \geq \sum_{i\in B} \lambda_i + 1.$$
Let $B'\triangleq \{i\in B| d_i(t)\geq \lambda_i\}$. Note that $\sum_{i\in B\setminus B'} \lambda_i \geq \sum_{i\in B\setminus B'} d_i(t)$, and therefore
$$\sum_{i\in B'} d_i(t)\geq \sum_{i\in B'} \lambda_i + 1.$$

Thus we have by the concavity of the $x^2$ function and Jensen's inequality:
\begin{multline}\label{equ:bound}
\lim_{\delta\rightarrow 0} \frac{P(t+\delta)-P(t)}{\delta} = - \sum_{i: w_i(t)>0 \mbox{\scriptsize{ or } } d_i(t)\geq \lambda_i} \frac{(d_i(t)-\lambda_i)^2}{\lambda_i} \le
- \sum_{i\in B'} \frac{(d_i(t)-\lambda_i)^2}{\lambda_{max}}  \\ = - \frac{|B'|}{\lambda_{max}}\cdot \frac{1}{|B'|}\cdot \sum_{i\in B'} (d_i(t)-\lambda_i)^2 \le
 - \frac{|B'|}{\lambda_{max}}\cdot \left(\frac{\sum_{i\in B'} (d_i(t)-\lambda_i)}{|B'|}\right)^2 \le- \frac{|B'|}{\lambda_{max}}\cdot \left(\frac{1}{|B'|}\right)^2 \le -\frac{1}{k \lambda_{max}},
\end{multline}
%
%
%
%Thus there exists $\hat{i}\in B$ such that
%$$d_{\hat{i}}(t) - \lambda_{\hat{i}} \geq \frac{1}{|B|} > \frac{1}{k}.$$
%Plugging in this inequality to the last part of Equation \eqref{equ:limit}, we have
%\begin{equation}\label{equ:bound}
%\lim_{\delta\rightarrow 0} \frac{P(t+\delta)-P(t)}{\delta} \leq -\frac{(d_{\hat{i}}(t)-\lambda_{\hat{i}})^2}{\lambda_{\hat{i}}} < -\frac{1}{k^2\lambda_{\hat{i}}} \leq -\frac{1}{k^2\lambda_{max}},
%\end{equation}
for any time $t$ before the dynamics converges.

%By Statement (3) of Theorem \ref{thm:wait}, there exists some hospital $H_i$ such that $d_i(t)>\lambda_i$, and thus
%$$\lim_{\delta\rightarrow 0} \frac{P(t+\delta)-P(t)}{\delta} \leq -\frac{1}{\lambda_i} \leq -\frac{1}{\lambda_{max}}.$$

Letting $T=2k\lambda_{max}MSW$ and assuming that the dynamics does not converge before time $T$, we now show that $P(T)=0$ and thus the dynamics must converge at time $T$. For any $t<T$, by Inequality \eqref{equ:bound} and our hypothesis, there exists $\delta(t)$ such that for all $\delta\in (0, \delta(t))$, $P(t+\delta)-P(t)\leq -\delta/(2k\lambda_{max})$.
Assume $P(T)>0$, and let
$$t^* \triangleq \sup\{t: t\leq T, P(t) - P(0) \leq -t/(2k\lambda_{max})\}.$$
As $P(t)$ is continuous, we have $P(t^*) - P(0) \leq -t^*/(2k\lambda_{max})$. Thus $t^*<T$, as $P(T)-P(0)> 0-MSW = -T/(2k\lambda_{max})$.
%Since $P(T)>0$, we also have $P(t^*)>0$, and thus
Accordingly, there exists $\delta\in (0, T-t^*)$ such that $P(t^*+\delta)-P(t^*) \leq -\delta/(2k\lambda_{max})$. Letting $t' = t^*+\delta$, we have $t^*<t'\leq T$ and
$$P(t') - P(0) = P(t^*+\delta) - P(t^*) + P(t^*)-P(0) \leq -(t^*+\delta)/(2k\lambda_{max})= -t'/(2k\lambda_{max}),$$
contradicting the definition of $t^*$.

Therefore $P(T)=0$, and the dynamics converges to $\bar{w}$ in time at most $T$, as desired.
\end{proof}

\begin{remark}
Although the potential function used in the above proof is similar to that used in the Hungarian method for unit-demand auctions, the analysis is different. For example, the potential function in the latter measure the total price paid at each time step, while ours measures the ``budgeted'' total waiting time $\sum_i \lambda_i w_i(t)$, which can be very different from the total waiting time. Moreover, in the latter the prices of the goods for sale never go down, making the analysis much easier. While in our dynamics the waiting times may go up and down, depending on the demands.
\end{remark}

\section{The Optimality of the Randomized Assignment}\label{sec:random}

Although waiting time is widely used to ration demand in economic settings, it may burn a lot of social welfare, since the time waited is not beneficial to anybody. Therefore in this section, we study different allocation schemes in healthcare and give evidence that the government can avoid the welfare-burning effect of waiting times by limiting the choices available to the patients. In particular, we show that the randomized assignment is actually optimal in terms of social welfare in many cases.

Following our discussion in Section \ref{sec:intro}, %Subsection \ref{subsec:results}, %
we consider the case of two hospitals, a ``good'' one $H_1$ and a ``bad'' one $H_0$, with costs $c_1>c_0$. As already said, whoever prefers $H_0$ can be directly assigned there and we do not consider them in our setting any more. The patients preferring $H_1$ are indexed by the interval $[0, 1]$, and each patient $x$ is associated with a value $v(x)$, indicating how long he is willing to wait at $H_1$ to be treated there instead of $H_0$. We assume that the patients have been renamed and normalized, so that $v(x)$ is non-decreasing and $v(0) = 0$.
Since the number of patients is infinite, we talk about the {\em cost density} $c_i(x)$ of each hospital, rather than the cost for serving a single patient. Without loss of generality, $c_1(x) \equiv 1$ and $c_0(x) \equiv 0$. The government has budget $B\in (0, 1)$, meaning that at most a $B$ fraction of the patients can be served at $H_1$.
The government's goal is to maximize the expected social welfare subject to the requirement that the budget constraint is satisfied in expectation.

In the randomized assignment, the government assigns each patient to $H_1$ with probability $p$ and waiting time 0. %let $p$ be the probability that a patient gets assigned to $H_1$.
The budget constraint gives
$$\int_0^1 p c_1(x) dx = p = B,$$
and the corresponding social welfare, denoted by $SW_r$, is
\begin{equation}\label{equ:SW_r}
SW_r = \int_0^1 p v(x) dx = B\int_0^1 v(x) dx.
\end{equation}
%As will become clear later, it is without any loss of generality to assume $\int_0^1 v(x) dx = 1$, and thus $SW_r = B$.

Below we compare this social welfare with that of lotteries.

\begin{definition}
A {\em contract} is a pair $(p, w)$, where $p\in [0, 1]$ is the probability of assigning a patient to $H_1$, and $w\geq 0$ is the waiting time for that patient at $H_1$.

A {\em lottery} consists of a set of contracts, denoted by the domain $D\subseteq [0,1]$ of the probabilities, and the waiting time function $w(p)$ defined over $D$.
\end{definition}

Given a contract $C = (p, w)$ for patient $x$, the expected utility of $x$ is
$$u(x, C) = p\cdot (v(x) - w).$$
Given a lottery $L = (D, w(p))$, each patient $x$ chooses the contract $C(x) = (p(x), w(p(x)))$ maximizing his expected utility. Namely, for each $p\in D$,
$$u(x, C(x))\geq u(x, (p, w(p))).$$
If there are more than one values of $p$ that  maximize the expected utility of $x$, we assume that $p(x)$ is the smallest one, so that the cost of serving patient $x$ is minimized.
Notice that $p(x)$ depends on $x$ only indirectly, via the function $v(x)$: indeed, $p(x) = p(x')$ whenever $v(x) = v(x')$.
Thus  we can write $p(x)$ as $p(v(x))$.
%\subsection{The randomized assignment is optimal compared with lotteries}

%In a lottery, the government offers the same set of contracts to every patient $x$. Each contract $C$ consists of a probability $p$ for assigning the patient to $H_1$, and a waiting time $w(p)$ for him at $H_1$. Thus a lottery $L$ is specified by the domain $D$ of $p$ and the function $w(p)$: $L = (D, w(p))$.

%, so $D$ can be a proper subset of $[0, 1]$.
As an example, the randomized assignment is a lottery with $D = \{B\}$ and $w(B)=0$.%
\footnote{In general $D$ can be a proper subset of $[0, 1]$, as the government may not offer the whole interval $[0, 1]$ for the patients to choose from.}
As another example, any equilibrium assignment is also a lottery, with $D = [0, 1]$ and $w(p)$ always equal to the waiting time of $H_1$ specified by the equilibrium.
Indeed, for every patient $x$, the contract maximizing his expected utility is to go to the hospital assigned by the equilibrium with probability 1.
%$(1, w)$, that is, to go to $H_1$ with probability 1 and waiting time $w$. For every $x$ with $v(x)\l
%Indeed, given the patients' valuation function $v(x)$, it is easy to see that, the optimal equilibrium assignment is to serve a $B$ fraction of the patients with higher valuations, namely, patients $x\in [1-B, 1]$, at the good hospital, with waiting time $v(1-B)$.

Without loss of generality, we assume that $D$ is a subinterval of $[0, 1]$, denoted by $[a, b]$.
Indeed, if a patient can choose between $(p_1, w(p_1))$ and $(p_2, w(p_2))$ according to the lottery, then by using a ``mixed strategy'' he can choose to be assigned to $H_1$ with any probability $p = \alpha p_1 + (1-\alpha)p_2$ with $\alpha\in [0,1]$, and corresponding expected waiting time $\alpha p_1 w(p_1) + (1-\alpha) p_2 w(p_2)$.

Also without loss of generality, we assume that the patients' expected waiting time function $p\cdot w(p)$ is convex, and thus differentiable almost everywhere. Indeed,
for any contracts $C_1 = (p_1, w(p_1))$, $C_2 = (p_2, w(p_2))$, and $C = (p, w(p))$ with $p = \alpha p_1 + (1-\alpha)p_2$ for some $\alpha\in [0,1]$, if
$p\cdot w(p) > \alpha p_1 w(p_1) + (1-\alpha) p_2 w(p_2)$, then a patient is always better off by mixing between $C_1$ and $C_2$ instead of choosing $C$. Thus we may simply assume that $p\cdot w(p) \leq \alpha p_1 w(p_1) + (1-\alpha) p_2 w(p_2)$.%
\footnote{Notice that $w(p)$ itself may not be convex.}

%We also assume that $w(p)$ is non-decreasing and continuous in $p$.
%
%The provision game is played as follows.
%can consider more general mechanisms, such as contracts with arbitrary convex $w(p)$, and see if anything beats random assignment when $v(x)$ is concave.
%
%\subsection{The optimality of randomized assignment}
%Finally we prove the optimality of randomized assignment in the two-hospital case.
%Recall that the patients are indexed by numbers in $[0,1]$, all patients value the bad hospital $H_0$ for 0, and patient $x$ values the good hospital $H_1$ for $v(x)$.

%chooses his own probability $p$ of being assigned to the good hospital, and gets the contract $(p, w(p))$.

\medskip

The social welfare and the budget constraint are naturally defined for lotteries, as follows.

\begin{definition}
Given a lottery $L = ([a, b], w(p))$ and the contracts $(p(x), w(p(x)))$ chosen by the patients $x\in [0,1]$, letting $u(x)\triangleq  u(x, (p(x), w(p(x)))$, the {\em social welfare} of $L$, denoted by $SW_L$, is
$$SW_L = \int_0^1 u(x) dx.$$

Lottery $L$ is {\em feasible} if the budget constraint is satisfied, namely, $\int_0^1 p(x) dx = B$.
\end{definition}

Notice that we require a feasible lottery to use up all the budget. This is again without any loss of generality, since our theorem below implies that any lottery with cost $B'<B$ is beaten by the randomized assignment with budget $B'$, and thus by the one with budget $B$.

We assume that the expected waiting time function $pw(p)$ is piece-wise twice differentiable in $p$. Notice that, although assuming twice differentiability of $pw(p)$ over the whole domain is too much, assuming it piece-wisely is quite natural. For example, the government may use different $w(p)$'s for different intervals of $p$, but inside each interval it uses a smooth $w(p)$. Both the randomized assignment and equilibrium assignments trivially satisfy this assumption.

The following theorem shows that, when the distribution of the patients' valuations accumulates toward the higher-value side, the randomized assignment is optimal compared with any lottery. Since equilibrium assignments are special cases of lotteries, the randomized assignment is optimal compared with them as well. %To compare this with the social welfare of the randomized assignment, again the concavity of $v(x)$ is the key, see below.

% is optimal whenever the patients' valuation function is non-decreasing and concave.
\begin{theorem}\label{thm:opt_lottery}
For any concave valuation function $v(x)$ and any feasible lottery $L = ([a, b], w(p))$, we have $SW_r \geq SW_L$. %then any contract with feasible $w(p)$ cannot beat randomized assignment.
\end{theorem}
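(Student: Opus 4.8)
The plan is to reduce the inequality $SW_r\ge SW_L$ to a one-dimensional rearrangement inequality, by deriving an envelope/integration-by-parts formula for the welfare of an arbitrary feasible lottery and then invoking Chebyshev's integral inequality. To set up, write $g(p):=p\,w(p)$ for the expected waiting time of contract $p$, so that patient $x$ picks $p(x)\in\argmax_{p\in[a,b]}\bigl(p\,v(x)-g(p)\bigr)$ (the smallest such $p$ on ties) and gets utility $u(x)=p(x)v(x)-g(p(x))=\max_{p\in[a,b]}\bigl(p\,v(x)-g(p)\bigr)$. I would first record two structural facts. (i) The objective $F_x(p):=p\,v(x)-g(p)$ has increasing differences in $(p,x)$, since for $x'>x$ the difference $F_{x'}(p)-F_x(p)=p\,(v(x')-v(x))$ is nondecreasing in $p$; hence by monotone comparative statics (and since $p(x)$ depends on $x$ only through $v(x)$ with ties broken toward the smallest $p$), the map $x\mapsto p(x)$ is nondecreasing. (ii) Since $v(0)=0$ and $g\ge 0$, we have $u(0)=-g(p(0))\le 0$.

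Next I would turn $u$ into an integral. As $u(x)=\max_p F_x(p)$ with $F_x(p)$ affine in $v(x)$, the envelope theorem gives $\frac{d}{dv}\max_p(pv-g(p))=p$, so by the chain rule $u'(x)=p(x)\,v'(x)$ for a.e.\ $x$; and $u$ is absolutely continuous, being the composition of the $1$-Lipschitz convex function $t\mapsto\max_p(pt-g(p))$ with the absolutely continuous function $v$. Therefore $u(x)=u(0)+\int_0^x p(s)v'(s)\,ds$, and Fubini's theorem yields
$$SW_L=\int_0^1 u(x)\,dx=u(0)+\int_0^1 p(s)\,v'(s)\,(1-s)\,ds .$$
Applying the same manipulation to $v(x)=\int_0^x v'(s)\,ds$ (here $v(0)=0$ is used) gives $\int_0^1 v(x)\,dx=\int_0^1 v'(s)\,(1-s)\,ds$, hence $SW_r=B\int_0^1 v'(s)\,(1-s)\,ds$. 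Writing $h(s):=v'(s)(1-s)$, subtracting, and using $u(0)\le 0$,
$$SW_r-SW_L=-u(0)+\int_0^1\bigl(B-p(s)\bigr)h(s)\,ds\ \ge\ \int_0^1\bigl(B-p(s)\bigr)h(s)\,ds .$$

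It then remains to show $\int_0^1\bigl(B-p(s)\bigr)h(s)\,ds\ge 0$. Feasibility of $L$ gives $\int_0^1(B-p(s))\,ds=B-B=0$; by (i) the map $s\mapsto B-p(s)$ is nonincreasing; and concavity of $v$ makes $v'$ nonincreasing, while $v'\ge 0$ (as $v$ is nondecreasing) and $1-s$ is nonincreasing and nonnegative on $[0,1]$, so $h$ is nonincreasing. Two nonincreasing functions on $[0,1]$ are positively correlated, so Chebyshev's integral inequality gives
$$\int_0^1\bigl(B-p(s)\bigr)h(s)\,ds\ \ge\ \Bigl(\int_0^1(B-p(s))\,ds\Bigr)\Bigl(\int_0^1 h(s)\,ds\Bigr)=0 ,$$
which completes the argument.

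The main obstacle is not the algebra but making the envelope step fully rigorous under the paper's mild smoothness hypotheses: one must verify that $u$ is absolutely continuous and that $u'(s)=p(s)v'(s)$ a.e.\ even when $v$ is flat on a subinterval (there both sides vanish) or when $g=pw(p)$ is only piecewise smooth (at the countably many kinks of $t\mapsto\max_p(pt-g(p))$ one uses the appropriate one-sided derivative), and that $p(\cdot)$ is genuinely monotone after the stated tie-breaking. It is also worth pinpointing where concavity of $v$ enters — exactly in the monotonicity of $h$, which is the step that lets the allocative advantage enjoyed by high-value patients be dominated by the burnt waiting time; for a convex $v$ (e.g.\ $v(x)=x^2$ with $B$ close to $1$) one checks that a threshold lottery gives $SW_L>SW_r$, so concavity cannot be dropped.
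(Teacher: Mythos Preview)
Your proof is correct and follows essentially the same route as the paper: both derive the envelope formula $u'(x)=p(x)v'(x)$, rewrite $SW_L$ and $SW_r$ as integrals against the weight $h(s)=(1-s)v'(s)$, and finish by exploiting that $B-p(s)$ and $h(s)$ are both nonincreasing with $\int_0^1(B-p(s))\,ds=0$. The only cosmetic differences are that you invoke Chebyshev's integral inequality by name where the paper proves the same inequality inline via a threshold point $x_B$, and that you keep track of the $u(0)\le 0$ term explicitly whereas the paper silently drops it.
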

\begin{proof}
%Given the waiting time function $w(p)$, each patient $x$ chooses probability $p(x)$ to maximize his utility $p(x)[v(x)-w(p(x))]$.
%For any two patients $x$ and $x'$ such that $v(x)=v(x')$, we assume  $p(x)=p(x')$.\footnote{This is automatically true if the function $v(x)$ is strictly increasing, or if there is a unique probability that maximizes the utility of $x$.} This assumption is without loss of generality as it does not change the contribution of $x$ to the social welfare.
%Now which probability maximizes the utility of $x$ depends only on $v(x)$, and we can also write $p(x)$ as $p(v(x))$.
%By definition, for each fixed $x$ and thus each fixed valuation $v$,
As the choice of $p(x)$ maximizes the utility of $x$, for any $\Delta>0$ patient $x$ prefers contract $C(x) = (p(x), w(p(x)))$ to contract $C(x+\Delta) = (p(x+\Delta), w(p(x+\Delta)))$, and patient $x+\Delta$ prefers $C(x+\Delta)$ to $C$. That is,
$$u(x) = p(x)[v(x)-w(p(x))] \geq p(x+\Delta)[v(x)-w(p(x+\Delta))],$$
and
$$u(x+\Delta) = p(x+\Delta)[v(x+\Delta) - w(p(x+\Delta))] \geq p(x)[v(x+\Delta) - w(p(x))].$$
Accordingly,
\begin{equation}\label{equ:delta}
v(x)\cdot \Delta p(x) \leq \Delta(p(x)\cdot w(p(x))), \quad \mbox{and} \quad v(x+\Delta)\cdot \Delta p(x) \geq \Delta(p(x)\cdot w(p(x))).
\end{equation}

As $pw(p)$ is piece-wise twice differentiable, all the differential equations and statements made in this paragraph hold piece-wisely, and we shall not mention the piece-wiseness again and again. To begin with, letting $\Delta\rightarrow 0$ in Equation \ref{equ:delta}, we have (with variable $x$ omitted for conciseness)
\begin{equation}\label{equ:7_2}
v = \frac{d(pw(p))}{dp},
\end{equation}
where the function on the right-hand side is well defined and differentiable in $p$. As $p(v)$ is the inverse of Equation \ref{equ:7_2}, it is differentiable in $v$. As $v(x)$ is concave, it is differentiable in $x$ almost everywhere. Thus $p(x) = p(v(x))$ is differentiable in $x$. Accordingly, we have
\begin{eqnarray}\label{equ:du}
d u(x) &=& dp\cdot (v-w) + p\cdot (dv - dw) = p\cdot dv + v\cdot dp - (w\cdot dp + p\cdot dw) \nonumber \\
 & = & p\cdot dv + v\cdot dp - d(p\cdot w) = p\cdot dv + v\cdot dp -v\cdot dp = p\cdot dv.
\end{eqnarray}
(Notice that $p(v)$ and $p(x)$ may not be continuous functions, but we only need them to be ``nice'' piece-wisely.)
%By definition, $p(x)$ maximizes $x$'s expected utility $u(x, p) = p\cdot (v(x) - w(p))$.

%Notice that $p(x)$ depends on $x$ only indirectly, via the function $v(x)$: indeed, $p(x) = p(x')$ whenever $v(x) = v(x')$.
%Thus  we can write $p(x)$ as $p(v(x))$.

Now putting all the pieces together and integrating both sides of Equation \ref{equ:du} over the whole domain, we have
\begin{equation}\label{equ:7_1}
u(x) = \int_0^{v(x)} p(\hat{v}) d\hat{v}.
\end{equation}

As $v(x)$ is non-decreasing  and concave, we have that $v'(x)\geq 0$ and $v'(x)$ is non-increasing.
If there exists $x<1$ such that $v'(x)=0$, then let $x_0$ be the smallest number with $v'(x_0)=0$; otherwise (i.e., $v(x)$ is strictly increasing) let $x_0=1$.
We have that $v(x)$ is strictly increasing on $[0, x_0]$ and constant on $[x_0, 1]$.
Let $v_0 = v(x_0)$. Following Equation~\ref{equ:7_1} the social welfare of lottery $L$ is
\begin{eqnarray*}
SW_L & = & \int_0^1 u(x) dx = \int_0^1 \int_0^{v(x)} p(\hat{v}) d\hat{v} dx = \int_0^{x_0} \int_0^{v(x)} p(\hat{v}) d\hat{v} dx + \int_{x_0}^1 \int_0^{v_0} p(\hat{v}) d\hat{v} dx \\
& = & \int_0^{v_0} \left(p(\hat{v}) \int_{v^{-1}(\hat{v})}^{x_0} dx\right) d\hat{v} + \int_0^{v_0} \left(p(\hat{v}) \int_{x_0}^1 dx\right) d\hat{v} \\
& = & \int_0^{v_0} p(\hat{v})\cdot (x_0 - v^{-1}(\hat{v})) d\hat{v} + \int_0^{v_0} p(\hat{v})\cdot (1-x_0) d\hat{v} \\
& = & \int_0^{x_0} p(x)(x_0-x)v'(x)dx + \int_0^{x_0} p(x)(1-x_0)v'(x)dx \\
& = & \int_0^{x_0} p(x)(1-x)v'(x)dx.
% \int_0^{v(1)} p(\hat{v})\cdot \left(\int_{v^{-1}(\hat{v})}^1 dx\right) d\hat{v} \\
% & = & \int_0^{v(1)} p(\hat{v})[1-v^{-1}(\hat{v})] d\hat{v} = \int_0^1 p(x)(1-x)v'(x)dx.
\end{eqnarray*}
Similarly, the social welfare of the randomized assignment can be written as
\begin{eqnarray*}
SW_r & = & \int_0^1 B v(x)dx = \int_0^1 \int_0^{v(x)} B dv dx = \int_0^{x_0} \int_0^{v(x)} B dv dx + \int_{x_0}^1 \int_0^{v_0} B dv dx \\
& = & \int_0^{v_0} \int_{v^{-1}(\hat{v})}^{x_0} B dx d\hat{v} + \int_0^{v_0} \int_{x_0}^1 B dx d\hat{v} = \int_0^{v_0} B(x_0 - v^{-1}(\hat{v})) d\hat{v} + \int_0^{v_0} B(1-x_0) d\hat{v} \\
& = & \int_0^{x_0} B(x_0-x)v'(x)dx + \int_0^{x_0}B(1-x_0)v'(x)dx = \int_0^{x_0} B(1-x)v'(x)dx.
\end{eqnarray*}
To prove $SW_r - SW_L\geq 0$, below we first show that $p(x)$ is non-decreasing. To do so, again notice that $p(x)$ maximizes the expected utility of $x$. Thus for any two patients $x_1<x_2$, we have
$$u(x_1) = p(x_1)(v(x_1) - w(p(x_1))) \geq p(x_2)(v(x_1) - w(p(x_2)))$$
and
$$\quad u(x_2) = p(x_2)(v(x_2)-w(p(x_2))) \geq p(x_1)(v(x_2)-w(p(x_1))).$$
Thus $p(x_2)(v(x_2)-v(x_1)) \geq p(x_1)(v(x_2)-v(x_1))$.
If $v(x_2) = v(x_1)$ then $p(x_2) = p(x_1)$ (as we already said, $p(x)$ only depends on $v(x)$), otherwise $p(x_2)\geq p(x_1)$. That is, the function $p(x)$ is non-decreasing.

As $L$ is feasible, we have $\int_0^1 p(x) dx = B$. Since $v(x)$ is constant on $[x_0, 1]$, so is $p(x)$. Therefore $p(x_0) \geq B$. Accordingly, there exists $x_B\in [0, x_0]$ such that $p(x)\leq B$ for all $x< x_B$, and $p(x)\geq B$ for all $x\geq x_B$.
Thus we have
\begin{eqnarray*}
SW_r - SW_L & = & \int_0^{x_0} (B-p(x))(1-x)v'(x)dx \\
& = & \int_0^{x_B} (B-p(x))(1-x)v'(x)dx + \int_{x_B}^{x_0} (B-p(x))(1-x)v'(x)dx.
\end{eqnarray*}
Notice that the value of $p(x_B)$ does not affect the value of the integration, thus without loss of generality we assume $p(x_B) = B$.

Again because $v'(x)$ is non-negative and non-increasing, for any $x\leq x_B$, we have $(1-x)v'(x)\geq (1-x_B)v'(x_B)\geq 0$. Because $B-p(x)\geq 0$ for all $x\leq x_B$, we have
$$(B-p(x))(1-x)v'(x)\geq (B-p(x))(1-x_B)v'(x_B).$$
Similarly, for any $x\geq x_B$, we have $0 \leq (1-x)v'(x)\leq (1-x_B)v'(x_B)$ and $B-p(x)\leq 0$, which again implies
$$(B-p(x))(1-x)v'(x)\geq (B-p(x))(1-x_B)v'(x_B).$$
Thus
\begin{eqnarray*}
SW_r - SW_L & \geq & \int_0^{x_B} (B-p(x))(1-x_B)v'(x_B)dx + \int_{x_B}^{x_0} (B-p(x))(1-x_B)v'(x_B)dx \\
& = & (1-x_B)v'(x_B) \int_0^{x_0} (B-p(x))dx.
\end{eqnarray*}
Following the budget constraint we have
$$\int_0^1 p(x) dx = \int_0^{x_0} p(x) dx + p(x_0)(1-x_0) = B = \int_0^{x_0} B dx + B(1-x_0),$$
and thus
$$\int_0^{x_0} (B-p(x))dx = (p(x_0)-B)(1-x_0).$$
Therefore
$$SW_r - SW_L \geq (1-x_B)v'(x_B)(p(x_0)-B)(1-x_0) \geq 0,$$
where the second inequality is because $x_B\leq 1$, $v'(x_B)\geq 0$, $p(x_0) \geq B$, and $x_0\leq 1$.

In sum, no feasible lottery can generate more social welfare than the randomized assignment, and Theorem \ref{thm:opt_lottery} holds.
\end{proof}

\begin{remark}
Notice that the analysis above holds as long as $(1-x)v'(x)$ is non-increasing. Thus the randomized assignment is optimal compared with any lottery even for some convex valuation function, such as $v(x) = e^x$. It would be interesting to fully characterize the condition under which the randomized assignment is optimal.
\end{remark}

%\begin{remark}
It is interesting to look at the above result from a different point of view. Since each patient's valuation is described by a single number, we are considering a single-parameter setting. With discrete patients, the capacity of the more expensive hospital is exactly $\lambda_1 = B/c_1$, and the game becomes a unit-demand auction for $\lambda_1$ copies of identical goods. In the latter setting, the prior-free money-burning mechanisms studied in \cite{hartline2008optimal} try to maximize the same social welfare as in our model. On the one hand, the solution concept used in \cite{hartline2008optimal} is dominant-strategy-truthfulness {\em in expectation}, so the final outcome realized may not be an equilibrium for the buyers. Thus the optimality of randomized assignment does not apply when compared with their mechanisms. On the other hand, their mechanisms are benchmarked against a particular class of mechanisms which do not include randomized assignment. Thus their optimality result does not apply either when compared with randomized assignment. It would be interesting to study how these two types of mechanisms compare with each other in different cases.

\end{document}